\newcommand{\remove}[1]{}
\newcommand{\maxbicliques}{g}
\newcommand{\mbc}{{\tt DynamicBC}}
\newtheorem{theorem}{Theorem}
\newtheorem{lemma}{Lemma}
\newtheorem{definition}{Definition}
\newtheorem{observation}{Observation}
\newcommand{\bicliques}{{\cal BC}}
\newcommand{\mbe}{{\tt MineLMBC}}
\newcommand{\csnewb}{{\tt NewBC}}
\newcommand{\cssubb}{{\tt SubBC}}
\begin{document}

\title{A Change-Sensitive Algorithm for Maintaining Maximal Bicliques in a Dynamic Bipartite Graph}

\author{Apurba~Das,
        Srikanta~Tirthapura,~\IEEEmembership{Senior Member,~IEEE,}
\IEEEcompsocitemizethanks{\IEEEcompsocthanksitem Das and Tirthapura are with the Department of Electrical
and Computer Engineering, Iowa State University, Ames, IA 50011.\protect\\
E-mail: {\tt \{adas,snt\}@iastate.edu}. \protect\\
The authors were partially supported through the NSF grant 1527541.
}
}

\IEEEtitleabstractindextext{%
\begin{abstract}
We consider the maintenance of maximal bicliques from a dynamic bipartite graph that changes over time due to the addition or deletion of edges.  When the set of edges in a graph changes, we are interested in knowing the change in the set of maximal bicliques (the ``change"), rather than in knowing the set of maximal bicliques that remain unaffected. The challenge in an efficient algorithm is to enumerate the change without explicitly enumerating the set of all maximal bicliques. In this work, we present (1) near-tight bounds on the magnitude of change in the set of maximal bicliques of a graph, due to a change in the edge set (2) a ``change-sensitive" algorithm for enumerating the change in the set of maximal bicliques, whose time complexity is proportional to the magnitude of change that actually occurred in the set of maximal bicliques in the graph. To our knowledge, these are the first algorithms for enumerating maximal bicliques in a dynamic graph, with such provable performance guarantees. Our algorithms are easy to implement, and experimental results show that their performance exceeds that of current baseline implementations by orders of magnitude. 
\end{abstract}

}

\maketitle

\section{Introduction}
\label{sec:intro}

Graphs are ubiquitous in representing linked data in many domains such as in social network analysis, computational biology, and web search. Often, these networks are dynamic,  where new connections are being added and old connections are being removed. The area of {\em dynamic graph mining} focuses on efficient methods for finding and maintaining significant patterns in a dynamic graph. In this work we focus on the maintenance of dense subgraphs within a dynamic graph.


Our work is motivated by many applications that require the maintenance of dense substructures from a dynamic graph. Angel et al.~\cite{AKS+13}, propose an algorithm for  identifying breaking news stories in real-time through dense subgraph mining from an evolving graph, defined on the co-occurrence of entities within messages in an online social network. \cite{JS+07} present methods for detecting communities among users in a microblogging platform through identifying dense structures in an evolving network representing connections among users. A sample of other applications of dense subgraph mining in networks include identification of communities in a social network~\cite{hanneman-socialnw,LSZL2011}, identification of web communities~\cite{GKT05,RH2005,KR+99}, phylogenetic tree construction~\cite{DABMMS2004,SDREL2003,YBE2005}, communities in bipartite networks~\cite{LSH08}, genome analysis~\cite{NK08}, and closed itemset mining~\cite{VMG08,LL+05}.

We consider the fundamental problem of maintaining maximal bicliques in a bipartite graph that is changing due to the addition or deletion of edges. Let $G = (L, R, E)$ be a simple undirected bipartite graph with its vertex set partitioned into $L$, $R$, and edge set $E \subseteq L\times R$. A biclique in $G$ is a bipartition $B = (X, Y)$, $X\subseteq L$, $Y\subseteq R$ such that each vertex in $X$ is connected to each vertex in $Y$. A biclique $B$ is called a maximal biclique if there is no other biclique $B'$ such that $B$ is a proper subgraph of $B'$. Let $\bicliques(G)$ denote the set of all maximal bicliques in $G$.

Suppose that, starting from bipartite graph $G_1 = (L, R, E)$, the state of the graph changes to $G_2 = (L, R, E\cup H)$ due to the addition of a set of new edges $H$. Let $\Upsilon^{new}(G_1, G_2) = \bicliques(G_2)\setminus\bicliques(G_1)$ denote the set of new maximal bicliques that arise in $G_2$ that were not present in $G_1$ and $\Upsilon^{del}(G_1, G_2) = \bicliques(G_1)\setminus\bicliques(G_2)$ denote the set of maximal bicliques in $G_1$ that are no longer maximal bicliques in $G_2$ (henceforth called as subsumed bicliques). See Fig.~\ref{fig:new-subsumed} for an example. Let $\Upsilon(G_1, G_2) = \Upsilon^{new}(G_1, G_2)\cup \Upsilon^{del}(G_1, G_2)$ denote the symmetric difference of $\bicliques(G_1)$ and $\bicliques(G_2)$. We ask the following questions:

(1)~How large can be the size of $\Upsilon(G_1, G_2)$? In particular, can a small change in the set of edges cause a large change in the set of maximal bicliques in the graph?

(2)~How can we compute $\Upsilon(G_1, G_2)$ efficiently? Can we quickly compute $\Upsilon(G_1, G_2)$ when $|\Upsilon(G_1, G_2)|$ is small? In short, can we design  \textit{change-sensitive algorithms} for enumerating elements of $\Upsilon(G_1, G_2)$, whose time complexity is proportional to the size of change, $|\Upsilon(G_1, G_2)|$?

\begin{figure}[!t]
\centering
\begin{tabular}{c}
\includegraphics[width=3in]{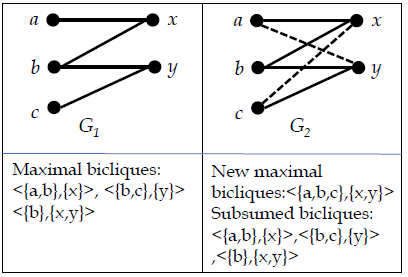}\\
\end{tabular}
\caption{Change in maximal bicliques when the graph changes from $G_1$ to $G_2$ due to the addition of edge set $H = \{\{a,y\}, \{c,x\}\}$. Note that each maximal biclique of $G_1$ is subsumed by a larger maximal biclique in $G_2$, and there is one new maximal biclique in $G_2$.}
\label{fig:new-subsumed}
\end{figure}
\subsection{Contributions}

{\bf Magnitude of Change:}
Let $\maxbicliques(n)$ denote the maximum number of maximal bicliques possible in an $n$ vertex bipartite graph. A result due to Prisner~\cite{P00} shows that $\maxbicliques(n) \leq 2^{n/2}$, where equality occurs when $n$ is even. We show that the change in the number of maximal bicliques when a single edge is added to the graph can be as large as $3\maxbicliques(n-2) \approx 1.5\times2^{n/2}$, which is exponential in the number of vertices in the graph. This shows that the addition of even a single edge to the graph can lead to a large change in the set of maximal bicliques in the graph. We further show that this bound is tight for the case of the addition of a single edge -- the largest possible change in the set of maximal bicliques upon adding a single edge is $3\maxbicliques(n-2)$. For the case when more edges can be added to the graph, it is easy to see that the maximum possible change is no larger than $2\maxbicliques(n)$.\\



\noindent{\bf Enumeration Algorithm:}
From our analysis, it is clear that the magnitude of change in the set of maximal bicliques in the graph can be as large as exponential in $n$ in the worst case. On the flip side, the magnitude of change can be as small as $1$ -- for example, consider the case when a newly arriving edge connects two isolated vertices in the graph. Thus, there is a wide range of values the magnitude of change can take. When the magnitude of change is very large, an algorithm that enumerates the change must inevitably pay a large cost, if only to enumerate the change. On the other hand, when the magnitude of change is small, it will ideally pay a smaller cost. This motivates our search for a {\bf change-sensitive algorithm} whose computational cost for enumerating the change is proportional to the magnitude of the change in the set of maximal bicliques.

We present a change-sensitive algorithm, $\mbc$, for enumerating the new maximal bicliques and subsumed maximal bicliques, when a set of new edges $H$ are added to the bipartite graph $G$. The algorithm $\mbc$ has two parts, $\csnewb$, for enumerating new maximal bicliques, and $\cssubb$, for enumerating subsumed maximal bicliques. When a batch of new edges $H$ of size $\rho$ is added to the graph, the time complexity of $\csnewb$ for enumerating $\Upsilon^{new}$, the set of new maximal bicliques, is $O(\Delta^2\rho|\Upsilon^{new}|)$ where $\Delta$ is the maximum degree of the graph after update. The time complexity of $\cssubb$ for enumerating $\Upsilon^{del}$, the set of subsumed bicliques, is $O(2^{\rho}|\Upsilon^{new}|)$. To the best of our knowledge, these are the first provably change-sensitive algorithms for maintaining maximal bicliques in a dynamic graph.

{\bf Experimental Evaluation:}
We present an empirical evaluation of our algorithms on real bipartite graphs with million of nodes. Our results shows that the performance of our algorithms are orders of magnitude faster than current approaches. For example, on the 
{\tt actor-movie-1} graph with $640$K vertices and $1.4$M edges, our algorithm 
took about 30 milliseconds for computing the change due to the addition of a batch of 100 edges, while the baseline algorithm took more than 30 minutes.


\subsection{Related Work}
\label{subsec:rel}
\textbf{Maximal Biclique enumeration (MBE) on a static graph:} There has been substantial prior work on enumerating maximal bicliques from a static graph. Alexe et al.~\cite{AA+04} propose an algorithm for MBE from a static graph based on the consensus method, whose time complexity is proportional to the size of the output (number of maximal bicliques in the graph) - termed as \textit{output-sensitive algorithm}. Liu et al.~\cite{LSL06} propose an algorithm for MBE based on depth-first-search (DFS). Damaschke~\cite{D14} propose an algorithm for bipartite graphs with a skewed degree distribution. G{\'e}ly et al.~\cite{GNS09} propose an algorithm for MBE through a reduction to  maximal clique enumeration (MCE).  However, in their work, the number of edges in the graph used for enumeration increases significantly compared to the original graph. Makino \& Uno~\cite{MU04} propose an algorithm for MBE based on matrix multiplication, which provides the current best time complexity for dense graphs. Eppstein~\cite{E94} proposes a linear time algorithm for MBE when the input graph has bounded arboricity. Other works on sequential algorithm for MBE on a static graph include~\cite{DDS05, DDS07}. \cite{MT14,XC+14,SMT14} present parallel algorithms for MBE and MCE for the MapReduce framework. \cite{LL+05} show a correspondence between closed itemsets in a transactional database and maximal cliques in an appropriately defined graph.




\textbf{Dense Structures from Dynamic Graphs:} There have been some prior works related to maintenance of dense structures similar to maximal bicliques in dynamic graphs. Kumar et al.~\cite{KR+99} define $(i,j)$-core which is a biclique with $i$ vertices in one partition and $j$ vertices in another partition. In their work, the authors propose a dynamic algorithm for extracting non-overlapping maximal set of $(i,j)$-cores for interesting communities. \cite{SG+13,LYM14,HC+14} present methods for maintaining $k$-cores and $k$-trusses in a dynamic graph, and~\cite{DST16} present algorithms for maintaining maximal cliques in a dynamic graph.

{\bf Roadmap:} The remaining section are organized as follows. We present definitions and preliminaries in Section~\ref{prelim}. Then we describe our algorithms in Section~\ref{csalgo}, results on the size of change in the set of maximal bicliques in Section~\ref{moc}, and experimental results in Section~\ref{expt}.

\section{Preliminaries}
\label{prelim}

Let $V(G)$ denote the set of vertices of $G$ and $E(G)$ the set of edges in $G$. Let $n$ and $m$ denote the number of vertices and number of edges in $G$ respectively. 
Let $\Gamma_{G}(u)$ denote the set of vertices adjacent to vertex $u$ in $G$. If the graph $G$ is clear from the context, we use $\Gamma(u)$ to mean $\Gamma_{G}(u)$. For an edge $e = (u,v)\in E(G)$, let $G-e$ denote the graph after deleting $e\in E(G)$ from $G$ and $G+e$ denote the graph after adding $e\notin E(G)$ to $G$. For a set of edges $H$, let $G+H$ ($G-H$) denote the graph obtained after adding (deleting) $H$ to (from) $E(G)$. Similarly, for a vertex $v\notin V(G)$, let $G+v$ denote the graph after adding $v$ to $G$ and for a vertex $v\in V(G)$, let $G-v$ denote the graph after deleting $v$ and all its adjacent edges from $E(G)$. Let $\Delta(G)$ denote the maximum degree of a vertex in $G$ and $\delta(G)$ the minimum degree of a vertex in $G$.

\begin{figure}[!t]
\centering
\captionsetup{justification=centering}
\begin{tabular}{c}
\includegraphics[width=.25\textwidth]{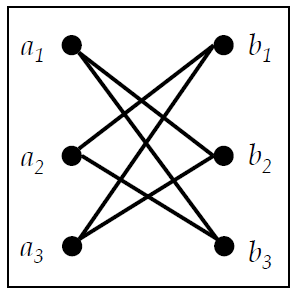}\\
\end{tabular}
\caption{Cocktail-party graph on 6 vertices $CP(3)$}
\label{fig:cocktail-party-graph}
\end{figure}


\begin{definition}[Change-Sensitive Algorithm]
An algorithm for a dynamic graph stream is called change-sensitive if its time complexity of enumerating the change in a graph property is proportional to the magnitude of change.
\end{definition}

{\bf Results for a static graph.} 
In~\cite{P00}, Prisner presented the following result on the  number of maximal bicliques in a bipartite graph with $n$ vertices. 

\begin{theorem}[Theorem 2.1~\cite{P00}]\label{thm1}
Every bipartite graph with $n$ vertices contains at most $2^{\frac{n}{2}}\approx 1.41^{n}$ maximal bicliques, and the only extremal (maximal) bipartite graphs are the graphs $CP(k)$.
\end{theorem}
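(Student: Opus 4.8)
The plan is to reduce counting maximal bicliques to counting \emph{closed} vertex sets on one side of the bipartition, and then to bound the number of such sets. For $X\subseteq L$ let $\Gamma(X)=\bigcap_{x\in X}\Gamma(x)\subseteq R$ denote the common neighborhood of $X$, and define $\Gamma(Y)$ for $Y\subseteq R$ symmetrically. If $(X,Y)$ is a maximal biclique, then maximality forces $Y=\Gamma(X)$ and $X=\Gamma(Y)$; in particular $X=\Gamma(\Gamma(X))$, so $X$ is a fixed point of the closure operator $X\mapsto\Gamma(\Gamma(X))$, and the entire biclique is recovered from $X$ alone. Hence the map sending a maximal biclique to its $L$-part is injective, and by the symmetric argument so is the map to its $R$-part.

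First I would read off the upper bound. Since each maximal biclique is determined by a subset of $L$, their number is at most $2^{|L|}$; by symmetry it is also at most $2^{|R|}$, hence at most $2^{\min(|L|,|R|)}$. As $\min(|L|,|R|)\le\lfloor n/2\rfloor\le n/2$, this yields at most $2^{n/2}$ maximal bicliques, which is the stated bound, and the argument needs nothing beyond the closed-set correspondence.

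Next I would chase the equality conditions to identify the extremal graphs. Attaining $2^{n/2}$ forces $\min(|L|,|R|)=n/2$, so $n$ is even and $|L|=|R|=k:=n/2$, and moreover \emph{every} one of the $2^k$ subsets of $L$ must be closed. Thus the closure operator is the identity on $2^{L}$, which makes the inclusion-reversing map $\Gamma:2^{L}\to 2^{R}$ injective and therefore, by equal cardinalities, a bijection. An inclusion-reversing bijection of Boolean lattices is an anti-isomorphism, and every such map is complementation followed by a relabeling of coordinates; concretely, each singleton $\{l_i\}$ maps to a co-atom $R\setminus\{r_{\sigma(i)}\}$ for some permutation $\sigma$. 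Translating $\Gamma(\{l_i\})=R\setminus\{r_{\sigma(i)}\}$ back into adjacencies says that $l_i$ is joined to every vertex of $R$ except $r_{\sigma(i)}$, so after relabeling the graph is exactly $K_{k,k}$ with a perfect matching removed, i.e.\ $CP(k)$.

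I expect the extremal characterization, not the bound, to be the main obstacle. The bound drops out immediately once one observes that a maximal biclique is pinned down by either of its sides. The delicate points are: arguing that equality propagates all the way to ``every subset of the smaller side is closed'' (and that this forces $|L|=|R|$); handling the degenerate bicliques with an empty part, which is exactly where the convention that makes $2^{n/2}$ attained on the nose resides; and invoking the classification of inclusion-reversing bijections of $2^{[k]}$ to conclude that the only extremal graph is $CP(k)$.
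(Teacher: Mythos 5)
The paper does not prove this statement: it is imported verbatim as Theorem 2.1 of Prisner~\cite{P00} and used as a black box, so there is no internal proof to compare yours against. Taken on its own terms, your argument is correct. The Galois-connection observation that a maximal biclique $(X,Y)$ satisfies $Y=\Gamma(X)$ and $X=\Gamma(Y)$, hence is determined by either of its sides, gives the bound $2^{\min(|L|,|R|)}\le 2^{\lfloor n/2\rfloor}\le 2^{n/2}$ cleanly, and it correctly accounts for the degenerate bicliques with one empty part (e.g.\ $(\emptyset,R)$ corresponds to the closed set $\emptyset\subseteq L$ when no vertex of $L$ dominates $R$), which is exactly the convention under which $CP(k)$ attains $2^{k}$ on the nose. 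Your equality analysis is also sound: having $2^{k}$ maximal bicliques with pairwise distinct closed $L$-parts forces every one of the $2^{k}$ subsets of $L$ to be closed, and forces $|L|=|R|=n/2$. The one step worth making explicit is why the inclusion-reversing bijection $\Gamma:2^{L}\to 2^{R}$ is a genuine lattice anti-isomorphism (i.e.\ why its inverse also reverses inclusion): its inverse is the common-neighborhood map $2^{R}\to 2^{L}$, since the two maps compose to the identity and the lattices have equal cardinality, and that map reverses inclusion by construction. With that in place, the classification of anti-isomorphisms of the Boolean lattice as complementation composed with a coordinate permutation yields $\Gamma(\{l_i\})=R\setminus\{r_{\sigma(i)}\}$, i.e.\ $K_{k,k}$ minus a perfect matching, which is $CP(k)$; so your proof establishes both the bound and the extremal characterization.
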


Here, $CP(k)$ denotes the \textit{cocktail-party} graph which is a bipartite graph with $k$ vertices in each partition where $V(CP(k)) = \{a_1, a_2, \ldots, a_k, b_1, b_2, \ldots, b_k\}$ and $E(CP(k)) = \{(a_i, b_p) : i \neq p\}$~\cite{P00}. See Figure~\ref{fig:cocktail-party-graph} for an example.

As a subroutine, we use  an algorithm for enumerating maximal bicliques from a static undirected graph, whose runtime is proportional to the number of maximal bicliques. There are a few algorithms of this kind~\cite{AA+04, LSL06, ZP+14}. We use the following result due to Liu et al.~\cite{LSL06} as it provides best possible time and space complexity.

\begin{theorem}[Liu et al., \cite{LSL06}]\label{thm2}
For a graph $G$ with $n$ vertices, $m$ edges, maximum degree $\Delta$, and number of maximal bicliques $\mu$, there is an algorithm $\mbe$ for enumerating maximal bicliques in $G$ with time complexity $O(n\Delta\mu)$ and space complexity $O(m+\Delta^2)$.
\end{theorem}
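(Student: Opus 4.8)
The plan is to exhibit the recursive, depth-first enumeration procedure $\mbe$ and bound its running time directly against the size of its output. I would represent each biclique by its vertex set $X$ on one side (say $X\subseteq L$); the opposite side is then forced to be $Y = \Gamma(X) = \bigcap_{u\in X}\Gamma(u)$, and $(X,Y)$ is maximal precisely when $\Gamma(Y)=X$ as well. It therefore suffices to enumerate, without repetition, all sets $X$ that are closed under the double-neighborhood (Galois) closure. The procedure maintains the current set $X$ together with a \emph{tail} of still-eligible vertices, and recurses by committing one tail vertex at a time in a fixed global vertex order.

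First I would establish correctness and duplicate-freeness. The key device is a canonical-extension rule combined with a \emph{forcing} step: at each node we first add to $X$ every tail vertex $w$ with $\Gamma(X)\subseteq\Gamma(w)$ (such $w$ leave the common neighborhood unchanged and hence must belong to any maximal biclique sharing this neighborhood), output the resulting closure when it is maximal, and then branch only on the remaining tail vertices in increasing order, discarding a vertex $v$ whenever some smaller-indexed vertex could have been committed first. A standard argument then shows that each maximal biclique is generated exactly once, as the unique node reached by always committing the smallest admissible vertex.

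Next I would bound the recursion tree. The crucial structural claim is that the tree contains no \emph{barren} subtrees: because the forcing step collapses every unary chain and guarantees a nonempty closure, every recursive call yields at least one maximal biclique within its subtree, and distinct branches reach distinct bicliques. Consequently the number of tree nodes is $O(\mu)$, where $\mu=|\bicliques(G)|$. At each node the work consists of recomputing common neighborhoods and filtering the tail; since $|\Gamma(X)|\le\Delta$, intersecting it against each of the $O(n)$ tail vertices costs $O(\Delta)$ apiece, for $O(n\Delta)$ per node. Multiplying gives the total time $O(n\Delta\mu)$. For space I would store adjacency lists in $O(m)$ and observe that the depth-first traversal keeps live only the working sets ($X$, the tail restricted to the relevant two-hop neighborhood, and $\Gamma(X)$) along the active root-to-node path; with in-place restoration on backtracking, the dominant frontier is a neighborhood of size $\le\Delta$ together with the candidate intersections it induces, bounded by $O(\Delta^2)$, yielding $O(m+\Delta^2)$ overall.

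The hardest part will be the output-sensitivity argument, namely proving that the tree has no barren branches so that its node count is genuinely $O(\mu)$ rather than exponential. This hinges entirely on the interaction between the forcing step and the canonical order: one must show that after forcing, every branch on a remaining tail vertex still contains a maximal biclique in its subtree, and that no two branches can reach the same biclique. A second, more technical obstacle is the careful accounting behind the stated constants: restricting the tail to the induced two-hop neighborhood, reusing neighborhood arrays across recursion levels, and choosing data structures so that each neighborhood intersection costs only $O(\Delta)$ are precisely what bring the per-node cost down to $O(n\Delta)$ and the auxiliary space down to $O(\Delta^2)$, rather than the naive larger bounds.
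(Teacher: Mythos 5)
The paper does not actually prove this statement: Theorem~\ref{thm2} is quoted from Liu et al.~\cite{LSL06} and used purely as a black-box subroutine ($\mbe$), so there is no internal proof to compare yours against. Your outline does track the real structure of $\mbe$ --- a depth-first search over Galois-closed vertex sets with a tail of eligible vertices, a closure (``forcing'') step, and a canonical-extension test to suppress duplicates --- so as a reconstruction of the cited algorithm it points in the right direction.

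As a proof, however, it has genuine gaps, and they sit exactly where you yourself locate the difficulty. First, the claim that the recursion tree has no barren subtrees --- that every surviving branch contributes at least one maximal biclique, so the tree has $O(\mu)$ nodes and distinct branches yield distinct bicliques --- is asserted rather than established. This is the entire content of output-sensitivity, and it hinges on the precise pruning test (in $\mbe$, the branch on $v$ survives only if the closure $\Gamma(\Gamma(X\cup\{v\}))$ introduces no vertex outside the current tail); both the soundness of that test and its duplicate-freeness need an argument you only gesture at. Second, the per-node cost accounting does not obviously close: computing the closure $\Gamma(\Gamma(X\cup\{v\}))$ for one candidate $v$ costs $O(\Delta^2)$ with adjacency lists (one must intersect up to $\Delta$ neighborhoods), and doing this for $O(n)$ tail vertices gives $O(n\Delta^2)$ per node, not the $O(n\Delta)$ you claim; recovering the stated $O(n\Delta\mu)$ total requires either a cheaper maximality check or a charging scheme that bills the closure computations of successful candidates to the bicliques their subtrees output. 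Since the result is imported, the honest options are to cite it as the paper does, or to supply these two missing arguments in full.
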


$\mbe$ is depth-first-search (DFS) based algorithm for enumerating maximal bicliques of a static graph $G = (V,E)$. It takes as input the graph $G$ and the size threshold $s$. The algorithm enumerates all maximal bicliques of $G$ with size of each partition at least $s$. Clearly, by setting $s=1$, the algorithm enumerates all maximal bicliques of $G$.


\remove{
\begin{algorithm}
\DontPrintSemicolon
\caption{$\mbe(G, X, \Gamma(X), T, s)$}
\label{minelmbc}
\KwIn{$G$ - The input graph, $X$ - The vertex set to be processed, $\Gamma(X)$ - adjacency list of $X$ in $G$, $T$ - Tail vertices of $X$ in $G$, $s$ - size threshold}
\KwOut{Set of all maximal bicliques of $G$ meeting the size threshold}
\ForAll {$v \in T$}{
	\If {$|\Gamma(X\cup\{v\})| < s$}{
		$T\gets T\setminus\{v\}$\;
	}
}
\If {$|X| + |T| < s$}{
	\Return;
}
Sort vertices of $T$ into ascending order of $|\Gamma(X\cup\{v\})|$\;
\ForAll {$v \in T$}{
	$T\gets T\setminus\{v\}$\;
	\If {$|X\cup\{v\}| + |T| \geq s$}{
		$N\gets \Gamma(X\cup\{v\})$\;
		$Y\gets \Gamma(N)$\;
		\If{$Y\setminus(X\cup\{v\})\subseteq T$}{
			\If{$|Y|\geq s$}{
				Output $<Y,N>$ as a maximal biclique meeting size threshold\;
			}
			$\mbe(G, Y, N, T\setminus Y, s)$\;
		}
	}
}
\end{algorithm}
}

\section{Change-Sensitive Algorithm for Maximal Bicliques}
\label{csalgo}
In this section, we present a change-sensitive algorithm $\mbc$ for enumerating the change in the set of maximal bicliques. The algorithm has two parts : (1)~Algorithm $\csnewb$ for enumerating new maximal bicliques, and (2)~Algorithm $\cssubb$ for enumerating subsumed bicliques. For graph $G$ and set of edges $H$, we use $\Upsilon^{new}$ to mean $\Upsilon^{new}(G,G+H)$,  and $\Upsilon^{del}$ to mean $\Upsilon^{del}(G, G+H)$.

\begin{algorithm}
\DontPrintSemicolon
\caption{$\mbc(G,H, \bicliques(G))$}
\label{algo:mbc}
\KwIn{$G$ - Input bipartite graph, $H$ - Edges being added to $G$, $\bicliques(G)$}
\KwOut{$\Upsilon$ : the union of set of new maximal bicliques and subsumed bicliques}
$\Upsilon^{new}\gets \csnewb(G,H)$\;
$\Upsilon^{del}\gets \cssubb(G, H, \bicliques(G), \Upsilon^{new})$\;
$\Upsilon\gets \Upsilon^{new}\cup\Upsilon^{del}$\;
\end{algorithm}

We first present Algorithm \csnewb~ for enumerating new cliques in Section~\ref{new_bicliques}, and Algorithm~\csnewb~ for enumerating subsumed cliques in Section~\ref{sub_bicliques}. The main result on the time complexity of $\mbc$ is summarized in the following theorem.

\begin{theorem}\label{thm:main}
$\mbc$ is a change-sensitive algorithm for enumerating the change in the set of maximal bicliques, with time complexity $O(\Delta^{2}\rho|\Upsilon^{new}| + 2^{\rho}|\Upsilon^{new}|)$ where $\Delta$ is the maximum degree of a vertex in $G+H$ and $\rho$ is the size of $H$.
\end{theorem}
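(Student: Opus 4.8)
The plan is to establish the bound by decomposing the running time of $\mbc$ along the two phases of Algorithm~\ref{algo:mbc} and summing their costs. By construction, $\mbc$ first invokes $\csnewb$ to produce $\Upsilon^{new}$ and then invokes $\cssubb$ on the already-computed set $\Upsilon^{new}$ to produce $\Upsilon^{del}$; the total cost is therefore the sum of the two subroutine costs plus the $O(|\Upsilon^{new}| + |\Upsilon^{del}|)$ cost of forming the disjoint union $\Upsilon^{new}\cup\Upsilon^{del}$, which is dominated by the other terms. Hence the theorem reduces to two claims, each proved in its own subsection: that $\csnewb$ runs in time $O(\Delta^{2}\rho|\Upsilon^{new}|)$ (Section~\ref{new_bicliques}), and that $\cssubb$ runs in time $O(2^{\rho}|\Upsilon^{new}|)$ (Section~\ref{sub_bicliques}). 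Adding these yields the stated bound $O(\Delta^{2}\rho|\Upsilon^{new}| + 2^{\rho}|\Upsilon^{new}|)$.

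To conclude change-sensitivity, I would argue that this bound is proportional to the magnitude of change $|\Upsilon| = |\Upsilon^{new}| + |\Upsilon^{del}|$. The key observation is that $|\Upsilon^{new}| \le |\Upsilon|$, so the running time is at most $O((\Delta^{2}\rho + 2^{\rho})|\Upsilon|)$, i.e.\ linear in $|\Upsilon|$ with a factor depending only on the update size $\rho$ and the maximum degree $\Delta$, and not on the total number of maximal bicliques in $G$ or $G+H$. I would also record the consistency check that $\Upsilon^{del}\neq\emptyset$ forces $\Upsilon^{new}\neq\emptyset$: any biclique subsumed by the update is properly contained in some maximal biclique of $G+H$ that uses a new edge and hence was not even a biclique in $G$, so it is a new maximal biclique. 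This guarantees that charging the cost of $\cssubb$ to $|\Upsilon^{new}|$ never conceals a nonzero amount of work behind a zero bound.

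The genuine work lies not in this composition but in the two subsection lemmas, which is where I expect the main obstacles to be. For $\csnewb$, the difficulty is output-sensitivity: the algorithm must produce exactly the new maximal bicliques without ever enumerating the (possibly exponentially many) bicliques of $G$ that survive the update, and without regenerating each new biclique too many times. I would structure that argument around the fact that every new maximal biclique must contain an endpoint of an added edge, localize the search to the $O(\Delta)$-sized neighborhoods of these endpoints (contributing the $\Delta^{2}$ factor), iterate over the $\rho$ edges of $H$ (contributing the $\rho$ factor), and bound the duplication so the per-output cost stays $O(\Delta^{2}\rho)$. For $\cssubb$, the obstacle is the combinatorial $2^{\rho}$ factor: I would show that each subsumed biclique is recovered from some element of $\Upsilon^{new}$ by deleting a subset of the contributions of the added edges, so that scanning the $O(2^{\rho})$ candidate sub-bicliques of each new maximal biclique and testing maximality in $G$ suffices, giving $O(2^{\rho}|\Upsilon^{new}|)$ time and, as a byproduct, the structural bound $|\Upsilon^{del}| \le 2^{\rho}|\Upsilon^{new}|$ used above.
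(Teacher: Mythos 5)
Your proposal matches the paper's (implicit) argument exactly: the paper proves Theorem~\ref{thm:main} by composing Theorem~\ref{thm:newbc-1}, which bounds $\csnewb$ by $O(\Delta^{2}\rho|\Upsilon^{new}|)$ via the localized subgraphs $G'_e$, with Theorem~\ref{thm:subbc-1}, which bounds $\cssubb$ by $O(2^{\rho}|\Upsilon^{new}|)$ via the at most $2^{\rho}$ maximal bicliques of $b-H$ for each $b\in\Upsilon^{new}$, and your decomposition and your sketch of both subroutine analyses follow the same route. Your added observation that $\Upsilon^{del}\neq\emptyset$ forces $\Upsilon^{new}\neq\emptyset$ is correct and is a worthwhile explicit justification of charging the subsumed-biclique cost to $|\Upsilon^{new}|$, a point the paper leaves implicit in the correctness argument of $\cssubb$.
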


\subsection{Enumerating New Maximal Bicliques}
\label{new_bicliques}
\begin{figure}[h]
\centering
\begin{tabular}{c}
\includegraphics[width=2.5in]{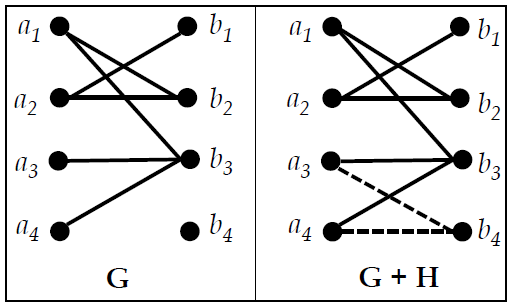}\\
\end{tabular}
\caption{The original graph $G$ has $4$ maximal bicliques. When new edges in $H$ (in dotted line) are added to $G$, all maximal bicliques in $G$ remain maximal in $G+H$ and only one maximal biclique is newly formed ($<\{a_3,a_4\},\{b_3,b_4\}>$).}
\label{fig:construction-1}
\end{figure}

Let $G'$ denote the graph $G+H$. A baseline algorithm for enumerating new maximal bicliques in $G'$ is to (1)~enumerate all maximal bicliques in $G$, (2)~enumerate all maximal bicliques in $G'$ both using an output-sensitive algorithm such as~\cite{LSL06}, and then (3)~compute $\bicliques(G')\setminus\bicliques(G)$. However, this is not change-sensitive, since we need to compute all maximal bicliques of $G'$ each time, but it is possible that most of the maximal bicliques in $G'$ are not new. For example, see Fig.~\ref{fig:construction-1}. We next present an approach that overcomes this difficulty. 



For each new edge $e\in H$, let $\bicliques'(e)$ denote the set of maximal bicliques in $G'$ containing edge $e$. 

\begin{lemma}\label{lem:new1}
$\Upsilon^{new} = \cup_{e\in H}\bicliques'(e)$.
\end{lemma}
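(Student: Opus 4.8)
The plan is to establish the set equality by proving the two inclusions separately, in both directions exploiting the monotonicity of bicliques under edge addition (since $E(G)\subseteq E(G')$ where $G' = G+H$) together with the fact that the edges in $H$ are genuinely new, i.e. $H\cap E(G) = \emptyset$.

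For the inclusion $\cup_{e\in H}\bicliques'(e)\subseteq\Upsilon^{new}$, I would take any $B = (X,Y)\in\bicliques'(e)$ for some $e\in H$. By definition $B$ is a maximal biclique of $G'$, so $B\in\bicliques(G')$. It then remains to show $B\notin\bicliques(G)$. Since $B$ contains the edge $e\in H$ but $e\notin E(G)$, the endpoints of $e$ lie one in $X$ and one in $Y$, yet they are not adjacent in $G$; hence $B$ is not even a biclique of $G$, so certainly $B\notin\bicliques(G)$. Combining these gives $B\in\bicliques(G')\setminus\bicliques(G) = \Upsilon^{new}$.

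For the reverse inclusion $\Upsilon^{new}\subseteq\cup_{e\in H}\bicliques'(e)$, I would take $B\in\Upsilon^{new}$, so $B$ is maximal in $G'$ but is not a maximal biclique of $G$, and argue by contradiction that $B$ must contain at least one edge of $H$. If $B$ contained no edge of $H$, then every edge of $B$ would lie in $E(G')\setminus H = E(G)$, so $B$ would be a biclique of $G$. The key step is to upgrade this to maximality: if $B$ were a non-maximal biclique of $G$, there would exist a biclique $B'$ in $G$ having $B$ as a proper subgraph, and since $E(G)\subseteq E(G')$ this $B'$ would also be a biclique of $G'$ properly containing $B$, contradicting the maximality of $B$ in $G'$. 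Thus $B$ would be a maximal biclique of $G$, i.e. $B\in\bicliques(G)$, contradicting $B\in\Upsilon^{new}$. Hence $B$ contains some $e\in H$, and therefore $B\in\bicliques'(e)\subseteq\cup_{e\in H}\bicliques'(e)$.

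The only real obstacle is the maximality-transfer step in the second inclusion: confirming that a biclique which is maximal in the supergraph $G'$ and happens to survive as a biclique in the subgraph $G$ is automatically maximal in $G$. This follows cleanly from monotonicity, since any strict enlargement of $B$ witnessed in $G$ would persist in $G'$. The remaining ingredients---that $H$ consists of new edges, so that any biclique through an $e\in H$ fails to be a biclique of $G$---are immediate from the definitions.
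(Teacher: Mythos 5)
Your proof is correct and follows essentially the same route as the paper's: both directions are handled by noting that a maximal biclique of $G'$ containing an edge of $H$ cannot be a biclique of $G$, and that a maximal biclique of $G'$ avoiding $H$ entirely survives as a maximal biclique of $G$. Your write-up is slightly more careful than the paper's in spelling out the maximality-transfer step (the paper asserts it in one line), but the underlying argument is identical.
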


\begin{proof}
Each biclique in $\Upsilon^{new}$ must contain at least one edge from $H$. To see this, consider a biclique $b \in \Upsilon^{new}$. If $b$ did not contain an edge from $H$, then $b$ is also a maximal biclique in $G$, and hence cannot belong to $\Upsilon^{new}$. Hence, $b\in \bicliques'(e)$ for some edge $e\in H$, and $b\in\cup_{e\in H}\bicliques'(e)$. This shows that $\Upsilon^{new}\subseteq\cup_{e\in H}B'(e)$.

Next consider a biclique $b\in\cup_{e\in H}\bicliques'(e)$. It must be the case that $b\in \bicliques'(h)$ for some $h$ in $H$. Thus $b$ is a maximal biclique in $G+H$, and $b$ contains edge $h\in H$ and $b$ cannot be a biclique in $G$. Thus $b\in\Upsilon^{new}$. This shows that $\cup_{e\in H}\bicliques'(e)\subseteq\Upsilon^{new}$.  
\end{proof}

Next, for each edge $e=(u,v)\in H$, we present an efficient way to enumerate all bicliques in $\bicliques'(e)$ through enumerating maximal bicliques in a specific subgraph $G'_e$ of $G'$, constructed as follows. Let $A = \Gamma_{G'}(u)$ and $B = \Gamma_{G'}(v)$. Then $G'_e = (A,B,E')$ is a subgraph of $G'$ induced by vertices in $A$ and $B$. See Fig.~\ref{fig:construction-2} for an example of the construction of $G'_e$.

\begin{figure}[!t]
\begin{tabular}{c}
\hspace{-4mm}
\includegraphics[width=.5\textwidth]{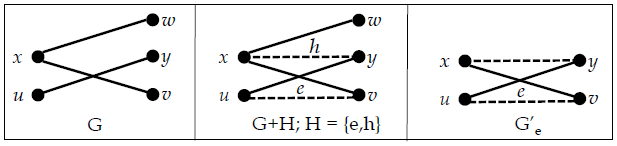}\\
\end{tabular}
\caption{Construction of $G'_e$ from $G' = G+H$ when a set of new edges $H=\{e,h\}$ is added to $G$. $A=\Gamma_{G'}(v)=\{u,x\}$ and $B=\Gamma_{G'}(u)=\{v,y\}$.}
\label{fig:construction-2}
\end{figure}

\begin{lemma}\label{lem:new2}
For each $e\in H$, $\bicliques'(e) = \bicliques(G'_e)$
\end{lemma}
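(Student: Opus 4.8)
The plan is to establish the set equality $\bicliques'(e)=\bicliques(G'_e)$ by double inclusion, where $e=(u,v)$, $A=\Gamma_{G'}(u)$, $B=\Gamma_{G'}(v)$, and $G'_e$ is the subgraph of $G'$ induced by $A\cup B$. The single structural fact that drives both inclusions is a \emph{confinement} observation: any biclique $(X,Y)$ of $G'$ that contains the edge $e$ (so $u\in X$ and $v\in Y$) must lie entirely inside $A\cup B$. Indeed, since $(X,Y)$ is a biclique and $u\in X$, every vertex of $Y$ is adjacent to $u$, whence $Y\subseteq\Gamma_{G'}(u)=A$; symmetrically $v\in Y$ forces $X\subseteq\Gamma_{G'}(v)=B$. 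Because $G'_e$ is an \emph{induced} subgraph of $G'$, a set of vertices forms a biclique in $G'_e$ if and only if it forms a biclique in $G'$; combined with confinement, every biclique of $G'$ containing $e$ is a biclique of $G'_e$, and conversely every biclique of $G'_e$ is a biclique of $G'$.

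For the inclusion $\bicliques'(e)\subseteq\bicliques(G'_e)$, I would take a maximal biclique $b=(X,Y)$ of $G'$ that contains $e$. By confinement $b$ is a biclique of $G'_e$, and I claim it is maximal there: any biclique of $G'_e$ strictly containing $b$ is, by the induced-subgraph property, also a biclique of $G'$ strictly containing $b$, contradicting maximality of $b$ in $G'$. Hence $b\in\bicliques(G'_e)$.

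The reverse inclusion is where the real work lies, and it has two parts. First I must show that \emph{every} maximal biclique of $G'_e$ already contains the edge $e$ — this is the step I expect to be the main obstacle, since the claim $\bicliques'(e)=\bicliques(G'_e)$ is false for an arbitrary induced subgraph and relies crucially on the specific choice $A=\Gamma_{G'}(u)$, $B=\Gamma_{G'}(v)$. The argument is an extension lemma: given any biclique $(X,Y)$ of $G'_e$ (with $X$ in the part $B$ and $Y$ in the part $A$), the vertex $u$ is adjacent in $G'_e$ to all of $A\supseteq Y$, the vertex $v$ is adjacent to all of $B\supseteq X$, and $u,v$ are adjacent via $e$; therefore $(X\cup\{u\},Y\cup\{v\})$ is again a biclique of $G'_e$. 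Consequently a maximal biclique cannot omit $u$ or $v$, so it must contain both, i.e. it contains $e$.

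Once $e$-containment is secured, I would finish by transferring maximality back to $G'$: let $(X,Y)$ be maximal in $G'_e$, so it contains $e$, and suppose some biclique $(X',Y')$ of $G'$ strictly contained it. Then $(X',Y')$ also contains $e$, so confinement places it inside $A\cup B$, making it a biclique of $G'_e$ strictly containing $(X,Y)$ — contradicting maximality in $G'_e$. Thus $(X,Y)$ is maximal in $G'$ and contains $e$, giving $\bicliques(G'_e)\subseteq\bicliques'(e)$ and completing the equality.
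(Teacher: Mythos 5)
Your proof is correct and follows essentially the same double-inclusion route as the paper's own argument, using the same two key facts: that any biclique of $G'$ containing $e$ is confined to $\Gamma_{G'}(u)\cup\Gamma_{G'}(v)$, and that any extension of a maximal biclique of $G'_e$ would have to live inside $G'_e$. The one place you add value is the explicit extension argument showing every maximal biclique of $G'_e$ must contain both $u$ and $v$ — a step the paper dismisses with ``clearly'' — so your write-up is, if anything, slightly more complete.
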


\begin{proof}
First we show that $\bicliques'(e)\subseteq \bicliques(G'_e)$. Consider a biclique $b = (X,Y)$ in $\bicliques'(e)$. Let $e = (u,v)$. Here $b$ contains both $u$ and $v$. Suppose that $u\in X$ and $v\in Y$. According to the construction $G'_e$ contains all the vertices adjacent to $u$ and all the vertices adjacent to $v$. And in $b$, all the vertices in $X$ are connected to all the vertices in $Y$. Hence, $b$ is a biclique in $G'_e$. Also, $b$ is a maximal biclique in $G'$, and $G'_e$ is an induced subgraph of $G'$ which contains all the vertices of $b$. Hence, $b$ is a maximal biclique in $G'_e$.

Next we show that $\bicliques(G'_e)\subseteq \bicliques'(e)$. Consider a biclique $b' = (X',Y')$ in $\bicliques(G'_e)$. Clearly, $b'$ contains $e$ as it contains both $u$ and $v$ and $b'$ is a maximal biclique in $G'_e$. Hence, $b'$ is also a biclique in $G'$ that contains $e$. Now we prove that $b'$ is also maximal in $G'$. Suppose not, that there is a vertex $w\in V(G')$ such that $b'$ can be extended with $w$. Then, as per the construction of $G'_e$, $w\in V(G'_e)$ since $w$ must be adjacent to either $u$ or $v$. Then, $b'$ is not maximal in $G'_e$. This is a contradiction. Hence, $b'$ is also maximal in $G'$. Therefore, $b'\in\bicliques'(e)$. 
\end{proof}

Based on the above observation, we present our change-sensitive algorithm $\csnewb$ (Algorithm~\ref{algo:newbc1}). We use an output-sensitive algorithm for a static graph $\mbe$ for enumerating maximal bicliques from $G'_e$. Note that typically, $G'_e$ is much smaller than $G'$ since it is localized to edge $e$, and hence enumerating all maximal bicliques from $G'_e$ should be relatively inexpensive.

\begin{algorithm}
\DontPrintSemicolon
\caption{$\csnewb(G,H)$}
\label{algo:newbc1}
\KwIn{$G$ - Input bipartite graph, $H$ - Edges being added to $G$}
\KwOut{bicliques in $\Upsilon^{new}$, each biclique output once}
Consider edges of $H$ in an arbitrary order $e_1, e_2, \ldots, e_{\rho}$\;
$G'\gets G + H$\;
\For{$i = 1\ldots \rho$}{
	$e \gets e_i=(u,v)$\;
	$G'_e \gets$ a subgraph of $G'$ induced by $\Gamma_{G'}(u)\cup\Gamma_{G'}(v)$\;
	Generate bicliques of $G'_e$ using $\mbe$. For each biclique thus generated,
	output b only if b does not contain an edge $e_j$ for $j < i$\;
}
\end{algorithm}
\begin{theorem}\label{thm:newbc-1}
$\csnewb$ enumerates the set of all new bicliques arising from the addition of $H$ in time $O(\Delta^{2}\rho|\Upsilon^{new}|)$ where $\Delta$ is the maximum degree of a vertex in $G'$ and $\rho$ is the size of $H$. The space complexity is $O(|E(G')| + \Delta^{2})$.
\end{theorem}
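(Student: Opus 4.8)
The plan is to split the proof into correctness and complexity, leaning on the two structural lemmas already proved. For correctness, I would first combine Lemma~\ref{lem:new1} and Lemma~\ref{lem:new2} to obtain $\Upsilon^{new} = \bigcup_{e\in H}\bicliques'(e) = \bigcup_{i=1}^{\rho}\bicliques(G'_{e_i})$, so the $\rho$ invocations of $\mbe$ collectively generate every member of $\Upsilon^{new}$ at least once. The remaining point is that each biclique is emitted \emph{exactly} once. By Lemma~\ref{lem:new2}, a new biclique $b$ is generated in iteration $i$ precisely when $e_i\in b$, and the output filter releases $b$ only when no earlier edge $e_j$ with $j<i$ lies in $b$. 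Hence $b$ is emitted exactly in the iteration given by the smallest index $i$ with $e_i\in b$; since every $b\in\Upsilon^{new}$ contains at least one edge of $H$ (established in the proof of Lemma~\ref{lem:new1}), this index is well defined and unique.

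For the running time I would analyze a single iteration and then sum. The graph $G'_{e_i}$ is induced on $\Gamma_{G'}(u)\cup\Gamma_{G'}(v)$, so it has at most $2\Delta$ vertices and maximum degree at most $\Delta$, and its edge set is extracted in $O(\Delta^{2})$ time by scanning the adjacency lists of these vertices. Applying Theorem~\ref{thm2} to $G'_{e_i}$ with $n\le 2\Delta$ and maximum degree $\le\Delta$, $\mbe$ enumerates its $\mu_i := |\bicliques(G'_{e_i})|$ maximal bicliques in $O(\Delta^{2}\mu_i)$ time. The per-biclique duplicate test costs $O(\Delta+\rho)$ (build membership structures for the two sides of $b$, then test each earlier $e_j$), which is dominated by the $O(\Delta^{2})$ amortized cost of producing $b$ in the regime $\rho=O(\Delta^2)$. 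Since each $e_i$ itself lies in at least one maximal biclique of $G'_{e_i}$ we have $\mu_i\ge 1$, so the $O(\Delta^2)$ construction cost is charged to $\mu_i$ and iteration $i$ runs in $O(\Delta^{2}\mu_i)$.

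The crux of the argument is the charging step that turns the per-edge sum into a change-sensitive bound. Here I would write $\sum_{i=1}^{\rho}\mu_i = \sum_{e\in H}|\bicliques'(e)| = \sum_{b\in\Upsilon^{new}}|\{e\in H : e\in b\}| \le \rho|\Upsilon^{new}|$, the inequality holding because each new biclique can contain at most $|H|=\rho$ edges of $H$. Combining with the per-iteration bound gives total time $O\!\left(\Delta^{2}\sum_i\mu_i\right)=O(\Delta^{2}\rho|\Upsilon^{new}|)$. For space, storing $G'$ costs $O(|E(G')|)$; each $G'_{e_i}$ has $O(\Delta)$ vertices and $O(\Delta^{2})$ edges, and by Theorem~\ref{thm2} running $\mbe$ on it uses $O(\Delta^{2})$ additional space. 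Because edges of $H$ are processed one at a time with this working space reused, and bicliques are streamed to the output rather than accumulated, the total is $O(|E(G')|+\Delta^{2})$.

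I expect the main obstacle to be the bookkeeping around uniqueness and charging rather than any single calculation: proving the ``earliest-index'' emission is both complete and duplicate-free, and then bounding $\sum_i\mu_i$ by $\rho|\Upsilon^{new}|$ via the observation that a biclique is recounted once per $H$-edge it contains. A secondary subtlety is confirming that the auxiliary per-iteration costs (induced-subgraph construction and the duplicate test) are genuinely subsumed by the enumeration cost $O(\Delta^2\mu_i)$; this relies on $\mu_i\ge 1$ and on treating $\rho=O(\Delta^2)$, which I would state explicitly so the clean $O(\Delta^{2}\rho|\Upsilon^{new}|)$ form is justified.
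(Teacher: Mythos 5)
Your proposal is correct and follows essentially the same route as the paper's proof: correctness via Lemma~\ref{lem:new1} and Lemma~\ref{lem:new2}, the per-iteration bound $O(\Delta^{2}\mu_i)$ from Theorem~\ref{thm2} applied to $G'_{e_i}$ with at most $2\Delta$ vertices, and the same space accounting. Your extra care about duplicate-free emission via the earliest-index filter and the charging $\sum_i\mu_i\le\rho|\Upsilon^{new}|$ only makes explicit what the paper leaves implicit (the paper simply bounds each $\mu_i$ by $|\Upsilon^{new}|$ and multiplies by $\rho$).
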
 

\begin{proof}
First we consider correctness of the algorithm. From Lemma~\ref{lem:new1} and Lemma~\ref{lem:new2}, we know that $\Upsilon^{new}$ is enumerated by enumerating $\bicliques(G'_e)$ for every $e\in H$. Our algorithm does this exactly, and use the $\mbe$ algorithm for enumerating $\bicliques(G'_e)$.

For the runtime, consider that the algorithm iterates over each edge $e$ in $H$. In each iteration, it constructs a graph $G'_e$ and runs $\mbe(G'_e)$. Note that the number of vertices in $G'_e$ is no more than $2\Delta$, since it is the size of the union of the edge neighborhoods of $\rho$ edges in $G'$. The set of maximal bicliques generated in each iteration is a subset of $\Upsilon^{new}$, therefore the number of maximal bicliques generated from each iteration is no more than $|\Upsilon^{new}|$. From Theorem~\ref{thm2}, we have that the runtime of each iteration is $O(\Delta^{2}|\Upsilon^{new}|)$. Since there are $\rho$ edges in $H$, the result on runtime follows. For the space complexity, we note that the algorithm does not store the set of new bicliques in memory at any point. The space required to construct $G'_e$ is linear in the size of $G'$. From Theorem~\ref{thm2}, the total space requirement is $O(|E(G')| + \Delta^{2})$.
\end{proof}

\subsection{Enumerating Subsumed Maximal Bicliques}
\label{sub_bicliques}

We now present a change-sensitive algorithm for enumerating $\bicliques(G)\setminus\bicliques(G')$ where $G' = G+H$.
Suppose a new maximal biclique $b$ of $G'$ subsumed a maximal biclique $b'$ of $G$. Note that $b'$ is also a maximal biclique in $b-H$. So, one idea is to enumerate all maximal bicliques in $b-H$ and then check which among them is maximal in $G$. However, checking maximality of a biclique is costly operation since we need to consider neighborhood of every vertex in the biclique. Another idea is to store the bicliques of the graph explicitly and see which among the generated bicliques are contained in the set of maximal bicliques of $G$. This is not desirable either since large amount of memory is required to store the set of all maximal bicliques of $G$.

A more efficient approach is to store the signatures of the maximal bicliques instead of storing the bicliques themselves. Then, we enumerate all maximal bicliques in $b-H$ and for each biclique generated, we compare the signature of the generated biclique with the signatures of the bicliques stored. An algorithm following this idea is presented in Algorithm~\ref{algo:cssubb1}. In this algorithm we reduce the cost of main memory by storing the signatures. We use a standard hash function (such as $64$ bit murmur hash~\footnote{https://sites.google.com/site/murmurhash/}) for computing signatures of maximal bicliques. For computing the signature, first we represent a biclique in canonical form (vertices in first partition represented in lexicographic order followed by vertices in another partition represented in lexicographic order). Then we convert the string into bytes, and apply hash function on the computed bytes. The hash function returns signature as output. By storing the signatures instead of maximal bicliques, we are able to check whether a maximal biclique from $b-H$ is contained in the set of maximal bicliques of $G$ by comparing their hash values. Thus we pay much less cost in terms of memory by storing the signatures of bicliques.


Now we prove that Algorithm~\ref{algo:cssubb1} indeed enumerates all maximal bicliques of $b-H$.

\begin{lemma}\label{lem:sub1}
In Algorithm~\ref{algo:cssubb1}, for each $b\in\Upsilon^{new}$, $S$ after Line $14$ contains all maximal bicliques in $b-H$.
\end{lemma}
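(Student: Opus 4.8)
The plan is to reduce Lemma~\ref{lem:sub1} to a statement about a single complete bipartite graph from which only a handful of edges have been deleted, and then to establish the claim by induction on the number of deleted edges processed by the loop. First I would fix $b=(X,Y)\in\Upsilon^{new}$ and let $H_b = H\cap(X\times Y)$ be the edges of $H$ that lie inside $b$; there are at most $\rho$ of them. Because $b$ is a complete bipartite subgraph of $G'=G+H$, its only edges removed by the operation $-H$ are those in $H_b$, so $b-H$ is exactly $K_{X,Y}$ with the edge set $H_b$ deleted. I would record the structural characterization that a pair $(X',Y')$, $X'\subseteq X$, $Y'\subseteq Y$, is a biclique of $b-H$ iff no deleted edge has both endpoints in $(X',Y')$, and is maximal iff no further vertex of $X\setminus X'$ or $Y\setminus Y'$ can be added without reintroducing such a conflict. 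In particular $b-H$ has at most $2^{|H_b|}\le 2^{\rho}$ maximal bicliques, matching the stated running time of $\cssubb$.

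The core of the proof is an induction matching this characterization to the execution of Algorithm~\ref{algo:cssubb1}. Writing $f_1,\dots,f_k$ ($k=|H_b|$) for the deleted edges in the order the algorithm processes them and $G_j$ for $K_{X,Y}$ with $f_1,\dots,f_j$ deleted, I would prove the loop invariant: after the $j$-th iteration the working set contains every maximal biclique of $G_j$. The base case $j=0$ holds because $G_0=K_{X,Y}$ has the unique maximal biclique $(X,Y)$, with which $S$ is initialized. For the step, take a maximal biclique $(X',Y')$ of $G_j$; since it is conflict-free for $f_1,\dots,f_{j-1}$ as well, it is a biclique of $G_{j-1}$ and extends to some maximal biclique $\hat b$ of $G_{j-1}$, which lies in $S$ by the inductive hypothesis. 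Writing $f_j=(u,v)$, I would then show that the way Algorithm~\ref{algo:cssubb1} treats $\hat b$ when it processes $f_j$ --- leaving $\hat b$ in place if it is already conflict-free for $f_j$, and otherwise replacing it by the two candidates obtained by dropping $u$ and by dropping $v$ --- produces exactly $(X',Y')$. Since the loop runs all $k$ iterations, $S$ after Line~$14$ then contains every maximal biclique of $G_k=b-H$, which is the assertion.

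The hard part will be the inductive step, specifically arguing that $\hat b$ and $(X',Y')$ differ by at most the single vertex that the processing of $f_j$ removes. The key observation I would use is a \emph{single-edge locality} fact: deleting only $f_j=(u,v)$ changes the extendability of a biclique solely for the vertices $u$ and $v$, so no vertex other than $u$ or $v$ can be present in $\hat b$ but absent from $(X',Y')$, nor vice versa. Combined with the maximality of $(X',Y')$ in $G_j$ --- which forbids $(X',Y')$ from being strictly contained in any biclique of $G_j$ --- this forces the appropriate drop-$u$ or drop-$v$ candidate of $\hat b$ to coincide with $(X',Y')$ exactly, rather than with a strict super- or sub-biclique. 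A careful case analysis on whether $u\in X'$ and whether $v\in Y'$ then closes the step. Because Lemma~\ref{lem:sub1} asserts only that $S$ \emph{contains} all maximal bicliques of $b-H$, I do not need to control spurious non-maximal members of $S$, so the entire argument can stay focused on completeness.
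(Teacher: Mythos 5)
Your proposal is correct and takes essentially the same approach as the paper: an induction on the edges of $H$ that lie inside $b$, using the observation that processing a single deleted edge $(u,v)$ either leaves a maximal biclique untouched (if it misses $u$ or $v$) or splits it into the drop-$u$ and drop-$v$ candidates. The only difference is one of emphasis --- you run the induction in the completeness direction (every maximal biclique of $G_j$ is produced from some maximal biclique of $G_{j-1}$ in $S$), which the paper's statement of the inductive step leaves implicit; also note that in your inductive step the maximality of $(X',Y')$ in $G_j$ together with the fact that the relevant drop-$u$ or drop-$v$ candidate is a biclique of $G_j$ containing it already forces equality, so the separate ``single-edge locality'' observation is not needed.
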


\begin{proof}
First observe that, removing $H$ from $b$ is equivalent to removing those edges in $H$ which are present in $b$. Hence, computing maximal bicliques in $b-H$ reduces to computing maximal bicliques in $b-H_1$ where $H_1$ is the set of all edges in $H$ which are present in $b$.

We use induction on the number of edges $k$ in $H_1$. Consider the base case, when $k=1$. $H_1$ contains a single edge $e_1=\{u,v\}$. Clearly, $b-H_1$ has two maximal bicliques $b\setminus\{u\}$ and $b\setminus\{v\}$. Suppose, that the set $H_1$ is of size $k$. Our inductive hypothesis is that all maximal bicliques in $b-H_1$ are enumerated. Consider $H_1' = \{e_1, e_2, ..., e_k, e_{k+1}\}$ with $k+1$ edges. Now each maximal biclique $b'$ in $b-H_1$ either remains maximal within $b-H_1'$ (if at least one endpoint of $e_{k+1}$ is not in $b'$) or generates two maximal bicliques in $b-H_1'$ (if both endpoints of $e_{k+1}$ are in $b'$). Thus, for each $b\in\Upsilon^{new}$, $S$ after Line $14$ contains all maximal bicliques within $b-H$.  
\end{proof}

\begin{algorithm}[t]
\DontPrintSemicolon
\caption{$\cssubb(G,H,BC,\Upsilon^{new})$}
\label{algo:cssubb1}
\KwIn{$G$ - Input bipartite graph \\ \hspace{1cm} $H$ - Edge set being added to $G$ \\ \hspace{1cm} $BC$ - Set of maximal bicliques in $G$ \\ \hspace{1cm} $\Upsilon^{new}$ - set of new maximal bicliques in $G + H$}
\KwOut{All cliques in $\Upsilon^{del} = \bicliques(G) \setminus \bicliques(G+H)$}
$\Upsilon^{del} \gets \emptyset$\;
\For{$b \in \Upsilon^{new}$}{
		$S\gets \{b\}$\;
		\For{$e = (u,v) \in E(b) \cap H$}{
			$S'\gets \phi$\;
			\For{$b' \in S$}{
				\If{$e \in E(b')$}{
					$b_1 = b'\setminus\{u\}$ ; $b_2 = b'\setminus\{v\}$\;
					$S' \gets S'\cup b_1$ ; $S' \gets S'\cup b_2$\;
					
				}
				\Else{
					
					$S' \gets S'\cup b'$\;
				}
			}
			\tcc{$S'$ contains all the maximal bicliques in $b-\{e_1, e_2, ..., e_k\}$ where $\{e_1, e_2, ..., e_k\}\subseteq E(b) \cap H$ are considered so far.}\;
			$S\gets S'$\;
		}
	\For{$b' \in S$}{
		\If{$b' \in BC$}{
			$\Upsilon^{del} \gets \Upsilon^{del} \cup b'$\;
		}
	}
}
\end{algorithm}

Now we show that the algorithm described above is a change-sensitive algorithm for enumerating all elements of $\Upsilon^{del}$ when the number of edges $\rho$ in $H$ is constant.

\begin{theorem}\label{thm:subbc-1}
Algorithm~\ref{algo:cssubb1} enumerates all bicliques in $\Upsilon^{del}=\bicliques(G)-\bicliques(G+H)$ using time $O(2^{\rho}|\Upsilon^{new}|)$ where $\rho$ is the number of edges in $H$. The space complexity of the algorithm is $O(|E(G')| + |V(G')| + \Delta^{2} + |\bicliques(G)|)$.
\end{theorem}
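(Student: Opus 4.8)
The plan is to establish correctness and then bound the time and space separately. Correctness follows almost directly from the lemmas already proved: by Lemma~\ref{lem:new1} and Lemma~\ref{lem:new2} every subsumed biclique $b'\in\Upsilon^{del}$ is a maximal biclique of $G$ that is properly contained in some new maximal biclique $b\in\Upsilon^{new}$, and since removing from $b$ the edges of $H$ that lie inside $b$ restores $b'$ as a maximal biclique of $b-H$, Lemma~\ref{lem:sub1} guarantees that $b'$ appears in the set $S$ computed for $b$. I would first argue the converse inclusion: any $b'\in S$ that passes the test $b'\in BC$ (i.e.\ is a maximal biclique of $G$) but is contained in the strictly larger biclique $b\in\bicliques(G+H)$ cannot remain maximal in $G+H$, hence belongs to $\Upsilon^{del}$; this justifies that the membership test on Line~15 exactly filters $S$ down to $\Upsilon^{del}$. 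I would also note that using signatures (hash values) rather than the bicliques themselves does not affect correctness modulo the standard assumption of no hash collisions.

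\textbf{Time complexity.} For the running time, fix a single $b\in\Upsilon^{new}$ and track the size of $S$ across the outer loop over $e\in E(b)\cap H$. Each edge of $H$ inside $b$ can at most double the number of bicliques held in $S$ (an element either survives unchanged or splits into the two bicliques $b_1,b_2$), so after processing all such edges we have $|S|\le 2^{|E(b)\cap H|}\le 2^{\rho}$. Each of the $O(2^{\rho})$ bicliques in $S$ is then looked up in $BC$ via its signature in expected $O(1)$ time, and the bookkeeping per split (computing $b'\setminus\{u\}$, $b'\setminus\{v\}$ and their signatures) is absorbed into the per-element cost. Summing the $O(2^{\rho})$ work over all $|\Upsilon^{new}|$ choices of $b$ yields the claimed $O(2^{\rho}|\Upsilon^{new}|)$ bound. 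I expect the only subtlety here to be arguing that the cost of each signature computation and comparison is $O(1)$ (constant, since a biclique in $b-H$ has bounded description relative to $b$ and the hash is fixed-width), so that the geometric growth of $|S|$ is the sole driver of the exponential factor.

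\textbf{Space complexity.} For space, the dominant terms are: the stored signatures of the maximal bicliques of $G$, contributing $O(|\bicliques(G)|)$; the working set $S$ for the current $b$, of size $O(2^{\rho})$ which is dominated by the other terms for constant $\rho$; and the storage for the graph $G'=G+H$ on which lookups and edge-membership tests are performed, contributing $O(|E(G')|+|V(G')|)$. The extra $\Delta^2$ term carries over from the representation used by the enumeration machinery (cf.\ Theorem~\ref{thm2}). Adding these gives $O(|E(G')|+|V(G')|+\Delta^{2}+|\bicliques(G)|)$.

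\textbf{Main obstacle.} The step I anticipate requiring the most care is the converse-inclusion direction of correctness, namely verifying that \emph{every} maximal biclique of $G$ that is subsumed in $G+H$ is in fact generated as some $b-H$ descendant of a \emph{single} new biclique $b$, so that iterating over $\Upsilon^{new}$ misses nothing; this relies on the fact that a subsumed biclique is contained in at least one new maximal biclique, which in turn follows from maximality in $G+H$, and on the guarantee from Lemma~\ref{lem:sub1} that the splitting procedure reconstructs all maximal bicliques of $b-H$. The time and space bounds are then routine, the geometric doubling argument being the only nontrivial counting step.
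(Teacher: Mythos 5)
Your proposal is correct and follows essentially the same route as the paper: the same correctness argument (each $b'\in\Upsilon^{del}$ is a maximal biclique of $b-H$ for some $b\in\Upsilon^{new}$, recovered by Lemma~\ref{lem:sub1}, with the membership test against $BC$ filtering $S$ down to exactly the subsumed bicliques), the same $2^{\rho}$ bound on $|S|$ (your per-edge doubling argument is the paper's induction on $\rho$ in different clothing), and the same accounting of the space terms. The only nit is that the containment of a subsumed biclique in some new maximal biclique follows from its non-maximality in $G+H$ (as you correctly note in your final paragraph), not from Lemmas~\ref{lem:new1} and~\ref{lem:new2}, which concern the structure of $\Upsilon^{new}$ itself.
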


\begin{proof}
We first show that every biclique $b'$ enumerated by the algorithm is indeed a biclique in $\Upsilon^{del}$. Note that $b'$ is a maximal biclique in $G$, due to explicitly checking the condition. Further, $b'$ is not a maximal biclique in $G+H$, since it is a proper subgraph of $b$, a maximal biclique in $G+H$. Next, we show that all bicliques in $\Lambda^{del}$ are enumerated. Consider any subsumed biclique $b'\in\Lambda^{del}$. It must be contained within $b\setminus H$,
where $b$ is a maximal biclique within $\Lambda^{new}$. Moreover, $b'$ will be a maximal biclique within $b\setminus H$, and will be enumerated by the algorithm according to Lemma~\ref{lem:sub1}.

For the time complexity we show that for any $b\in\Upsilon^{new}$, the maximum number of maximal bicliques in $b-H$ is $2^{\rho}$ using induction on $\rho$. Suppose $\rho=1$ so that $H$ contains a single edge, say $e_1=(u,v)$. Then, $b-H$ has two maximal bicliques, $b\setminus\{u\}$ and $b\setminus\{v\}$, proving the base case. Suppose that for any set $H$ of size $k$, it was true that $b-H$ has no more than $2^{k}$ maximal bicliques. Consider a set $H'' =
\{e_1,e_2,\ldots,e_{k+1}\}$ with $k+1$ edges. Let $H'=\{e_1,e_2,\ldots,e_k\}$. Subgraph $b-H''$ is obtained from $b-H'$ by deleting a single edge $e_{k+1}$. By induction, we have that $b-H'$ has no more than $2^{k}$ maximal bicliques. Each maximal biclique $b'$ in $b-H'$ either remains a maximal biclique within $b-H''$ (if at least one endpoint of $e_{k+1}$ is not in $b'$),
or leads to two maximal bicliques in $b-H''$(if endpoints of $e_{k+1}$ are in different bipartition of $b'$). Hence, the number of maximal bicliques in $b-H''$ is no more than $2^{k+1}$, completing the inductive step. 

Following this, for each biclique $b\in\Upsilon^{new}$, we need to check for maximality for no more than $2^{\rho}$ bicliques in $G$. This checking can be performed by checking whether each such generated biclique in contained in the set $\bicliques(G)$ and for each biclique, this can be done in constant time. 

For the space bound, we first note that in Algorithm~\ref{algo:cssubb1}, enumerating maximal bicliques within $b-H$ consumes space $O(|E(G')| + \Delta^2)$, and checking for maximality can be done in space linear in size of $G$. However, for storing the maximal bicliques in $G$ takes $O(|\bicliques(G)|)$ space. Hence, for these operations, the overall space-cost for each $b\in\Upsilon^{new}$ is $O(|E(G')| + |V(G')| + \Delta^2 + |\bicliques(G)|)$. The only remaining space cost is the size of $\Upsilon^{new}$, which can be large. Note that, the algorithm only iterates through $\Upsilon^{new}$ in a single pass. If elements of $\Upsilon^{new}$ are provided as a stream from the output of an algorithm such as $\csnewb$, then they do not need to be stored within a container, so that the memory cost of receiving $\Upsilon^{new}$ is reduced to the cost of storing a single maximal biclique within $\Upsilon^{new}$ at a time.
\end{proof}

\begin{algorithm}
\DontPrintSemicolon
\caption{Decremental(G,H)}
\label{algo:dec}
\KwIn{$G$ - Input bipartite graph, $H$ - Edges being deleted from $G$}
\KwOut{$\Upsilon^{new}(G,G-H)\cup\Upsilon^{del}(G,G-H)$}
$\Upsilon^{new}\gets\phi$; $\Upsilon^{del}\gets\phi$; $G''\gets G-H$\;
$\Upsilon^{del}\gets \csnewb(G'',H)$\;
$\Upsilon^{new}\gets \cssubb(G'', H, \bicliques(G''), \Upsilon^{del})$\;
$\Return \Upsilon^{new}\cup\Upsilon^{del}$\;
\end{algorithm}

\subsection{Decremental and Fully Dynamic Cases}
\label{decremental}
We now consider the maintenance of maximal bicliques in the decremental case, when edges are deleted from the graph. This case can be handled using a reduction to the incremental case. We show that the maintenance of maximal bicliques due to deletion of a set of edges $H$ from a bipartite graph $G$ is equivalent to the maintenance of maximal bicliques due to addition of $H$ to the bipartite graph $G-H$.

\begin{lemma}
$\Upsilon^{new}(G,G-H) = \Upsilon^{del}(G-H,G)$ and $\Upsilon^{del}(G,G-H) = \Upsilon^{new}(G-H, G)$
\end{lemma}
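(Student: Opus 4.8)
The plan is to unfold both sides of each claimed equality using only the definitions of $\Upsilon^{new}$ and $\Upsilon^{del}$, and to observe that the whole statement reduces to a symmetric bookkeeping identity about which graph plays the role of ``before'' and which plays ``after.'' Recall from the introduction that for graphs $P$ and $Q$ we have $\Upsilon^{new}(P,Q) = \bicliques(Q)\setminus\bicliques(P)$ and $\Upsilon^{del}(P,Q) = \bicliques(P)\setminus\bicliques(Q)$. The key trivial fact is that deleting $H$ from $G$ and then adding it back recovers $G$: since every edge of $H$ lies in $G$ (we are deleting it), we have $(G-H)+H = G$. Thus the two graphs appearing on both sides of the lemma are literally the same pair, namely $G$ and $G-H$, only with their roles as first and second argument swapped.

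First I would prove $\Upsilon^{new}(G,G-H) = \Upsilon^{del}(G-H,G)$. The left side is $\bicliques(G-H)\setminus\bicliques(G)$ by the definition of $\Upsilon^{new}$ with $P=G$ and $Q=G-H$. The right side is $\bicliques(G-H)\setminus\bicliques(G)$ by the definition of $\Upsilon^{del}$ with $P=G-H$ and $Q=G$. These are the same set, so the equality holds. Next I would prove $\Upsilon^{del}(G,G-H) = \Upsilon^{new}(G-H,G)$ by the mirror-image computation: the left side unfolds to $\bicliques(G)\setminus\bicliques(G-H)$, and the right side unfolds to $\bicliques(G)\setminus\bicliques(G-H)$, again literally identical. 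No further argument about maximality or biclique structure is needed, because the content is entirely in the definitional symmetry $\Upsilon^{new}(P,Q)=\Upsilon^{del}(Q,P)$.

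The honest assessment is that there is no real obstacle here: the lemma is essentially a definitional observation, and its purpose is to license the reduction described in the surrounding text, namely that maintaining maximal bicliques under deletion of $H$ from $G$ is the same task as maintaining them under addition of $H$ to $G-H$ (as realized in Algorithm~\ref{algo:dec}). The only point requiring a sentence of care is verifying that $H\subseteq E(G)$ so that $G-H$ is well defined and $(G-H)+H=G$; this is implicit in the decremental setting and should be stated explicitly. Once that is noted, I would simply present the general identity $\Upsilon^{new}(P,Q)=\Upsilon^{del}(Q,P)$ once and instantiate it twice, rather than writing out two nearly identical chains of set equalities.
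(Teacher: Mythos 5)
Your proposal is correct and matches the paper's own argument exactly: both simply unfold the definitions $\Upsilon^{new}(P,Q)=\bicliques(Q)\setminus\bicliques(P)$ and $\Upsilon^{del}(P,Q)=\bicliques(P)\setminus\bicliques(Q)$ and observe that each equality is the same set difference written twice. Your extra remark that $H\subseteq E(G)$ ensures $(G-H)+H=G$ is a reasonable clarification but not needed for the lemma itself, which is purely the definitional symmetry $\Upsilon^{new}(P,Q)=\Upsilon^{del}(Q,P)$.
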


\begin{proof}

Note that $\Upsilon^{new}(G,G-H)$ is the set of all bicliques that are maximal in $G-H$, but not in $G$. By definition, this is equal to $\Upsilon^{del}(G-H,G)$.
Similarly we can show that $\Upsilon^{del}(G,G-H) = \Upsilon^{new}(G-H, G)$.

\end{proof}

Based on the above lemma, an algorithm for the decremental case is presented in Algorithm~\ref{algo:dec}.  For the fully dynamic case, where we need to consider both the addition and deletion of edges, we first compute the changes due to addition of edges, followed by changes due to deletion of edges.
\section{{Magnitude of change in Bicliques}}
\label{moc}

\remove{
\subsection{Maximum Change for arbitrary edge additions}
We first consider the maximum change in the set of maximal bicliques when a set of edges is added to the bipartite graph without restriction on how many edges are added. Let $\lambda(n)$ denote the maximum size of $\Upsilon(G,G+H)$ taken over all $n$ vertex bipartite graphs $G$ and edge sets $H$. We derive bounds on $\lambda(n)$ in the following theorem:

\begin{figure}[t!]
\centering
\begin{tabular}{c}
\includegraphics[width=2.5in]{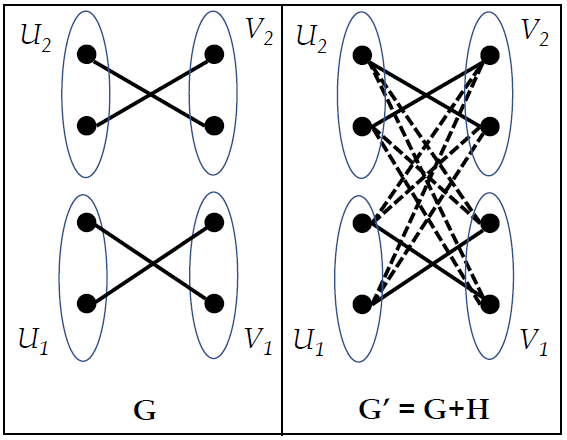}\\
\end{tabular}
\caption{Construction showing the changes in the set of maximal bicliques when new edges are added. $G$ is in the left on $n = 8$ vertices. $G_2$ consists of vertices in $U_2$ and $V_2$ and edges among them to make it a cocktail-party graph. Induced subgraph with vertices in $U_1$ and $V_1$ also form a cocktail-party graph. $G'$ is in the right after adding set of new edges $H$ (marked by dotted line) to $G$ to make the whole graph a cocktail-party graph.}
\label{fig:size-of-change-1}
\end{figure}

\begin{theorem}
	\begin{eqnarray*}
		1.5\maxbicliques(n) \leq \lambda(n) \leq 2\maxbicliques(n) \mbox{when $n$ is even}.\\
		\sqrt{2}\maxbicliques(n)\leq \lambda \leq 2\maxbicliques(n) \mbox{when $n$ is odd}.
	\end{eqnarray*}
\end{theorem}

\begin{proof}
First we consider the case when $n$ is even. We show that $\lambda(n)\leq 2\maxbicliques(n)$. Note that, for any bipartite graph $G$ with $n$ vertices and or any new edge set $H$ it must be true that $|\bicliques(G)|\leq \maxbicliques(n)$, and $|\bicliques(G+H)|\leq \maxbicliques(n)$ from Theorem~\ref{thm1}. Since $|\Upsilon^{new}(G,G+H)|\leq |\bicliques(G+H)|$ and $|\Upsilon^{del}(G,G+H)|\leq |\bicliques(G)|$, it follows that $|\Upsilon(G,G+H)| = |\Upsilon^{new}(G,G+H)| + |\Upsilon^{del}(G,G+H)|\leq |\bicliques(G+H)| + |\bicliques(G)| \leq 2\maxbicliques(n)$. 

We now show that there exists an $n$ vertex bipartite graph $G$ and a new edge set $H$ so that the size of $|\Upsilon(G,G+H)|$ is large. Bipartite graph $G = (L, R, E)$ with $|L| = |R|$ on $|L| + |R| = n$ vertices is constructed in the following manner. Choose $\frac{\epsilon}{2}$ vertices of $L$ into set $U_1$, and $\frac{\epsilon}{2}$ vertices of $R$ into set $V_1$ where $\epsilon \geq 2$. Let $U_2 = L\setminus U_1$ and $V_2 = R\setminus V_1$. Now we construct edges of $G$ as follows (Fig.~\ref{fig:size-of-change-1}):
\begin{itemize}
	\item Connect edges among vertices in $U_2$ and $V_2$ so that the induced graph on $U_2\cup V_2$ becomes $CP(\frac{n-\epsilon}{2})$. Let $G_2$ denote this induced subgraph on $U_2\cup V_2$ which has $\maxbicliques(n-\epsilon)$ maximal bicliques.
	\item Add edges among vertices $U_1$ and $V_1$ so that the bipartite subgraph induced by $U_1$ and $V_1$ becomes $CP(\frac{\epsilon}{2})$. Let this subgraph be $G_1$. Clearly, $G_1$ has $\maxbicliques(\epsilon)$ maximal bicliques. When $\epsilon = 2$, each of $U_1$ and $V_1$ consists of a single vertex and there will be no edges among vertices in $U_1$, and $V_1$. Then the subgraph will have two trivial bicliques $<\{\phi\}, V_1>$ and $<U_1, \{\phi\}>$.
	\item No edges among $U_1$ and $V_2$, among $U_2$ and $V_1$. 
\end{itemize}
The set of maximal bicliques in $G$ is same as the set of maximal bicliques in $G_2$ plus the set of maximal bicliques in $G_1$. Hence 
\begin{eqnarray*}\label{eqn:1}
	|\bicliques(G)| &= \maxbicliques(n-\epsilon)  + \maxbicliques(\epsilon)
\end{eqnarray*}
Note that, when $\epsilon = 2$, maximal bicliques of $G_1$ are trivial bicliques and they are combined with trivial bicliques of $G_2$. Therefore, $|\bicliques(G)| = \maxbicliques(n-2)$. 
Now we add a set of edges $H$ to $G$ so that $G+H$ becomes $CP(n/2)$ (Fig.~\ref{fig:size-of-change-1}).

Let $G' = G+H$. Observe that, $\bicliques(G)$ and $\bicliques(G')$ are disjoint sets. This is because, no maximal biclique in $G$ contains vertices from both $U_1\cup V_1$ and $U_2\cup V_2$ while every maximal biclique in $G'$ contains vertices from both $U_1\cup V_1$ and $U_2\cup V_2$. Hence, $\Upsilon(G,G') = \bicliques(G)\cup\bicliques(G')$, and
\begin{eqnarray}\label{eqn:2}
|\Upsilon(G,G')| &= |\bicliques(G)| + |\bicliques(G')|
\end{eqnarray}
Note that $G'$ is $CP(n/2)$ by construction. Hence, 
\begin{eqnarray}\label{eqn:3}
	|\bicliques(G')| = \maxbicliques(n)
\end{eqnarray}
Now from Equations~\ref{eqn:1}, \ref{eqn:2}, and \ref{eqn:3} we obtain
\begin{eqnarray*}
	|\Upsilon(G,G')| &= \maxbicliques(n-\epsilon) + \maxbicliques(\epsilon) + \maxbicliques(n) \\
				    &= 2^{\frac{n-\epsilon}{2}} + 2^{\frac{\epsilon}{2}} + 2^{\frac{n}{2}}
\end{eqnarray*}
The above expression is maximized when $\epsilon = 2$. In that case, $G_1$ contains only trivial maximal bicliques which are combined with trivial maximal bicliques of $G_2$. Hence, $|\Upsilon(G,G')| = 2^{\frac{n-2}{2}} + 2^{\frac{n}{2}} = 1.5\maxbicliques(n)$.

Next we consider the case when $n$ is odd. The proof for $\lambda(n) \leq 2\maxbicliques(n)$ is similar to the case when $n$ is even.

We now show that there exists a set of edges $H$ and a graph $G$ on $n = 2l+1$ vertices such that the size of change is large. Let us construct $G$ as follows:
\begin{enumerate}
	\item Choose $l$ vertices in set $L_1$ and $l$ vertices in set $R_1$. 
	\item Add edges among vertices in $L_1$ and $R_1$ so that it becomes a cocktail-party graph. 
\end{enumerate}
Clearly, there is one vertex $v$ in $G$ which is not connected to any other vertices. The number of maximal bicliques in $G$ is thus $\maxbicliques(n-1)$.

Next suppose that $G$ is updated to $G' = G+H$ and $H$ consists of the edges that connect $v$ to all the vertices in $R_1$. Then each maximal biclique in $G$ can be extended with $v$, and that will be the a unique maximal biclique in $G'$. Thus total number of maximal bicliques in $G'$ is $\maxbicliques(n-1)$. 

Note that, $\bicliques(G)$ and $\bicliques(G')$ are disjoint sets because, no maximal biclique in $G$ contains $v$, while every maximal biclique in $G'$ contains $v$. Thus, $\Upsilon(G,G')$ is the disjoint union of $\bicliques(G)$ and $\bicliques(G')$. Hence, $|\Upsilon(G,G')| = |\bicliques(G)| + |\bicliques(G')| = 2\maxbicliques(n-1) = \sqrt{2}\maxbicliques(n)$.
\end{proof}
}

\remove{
\begin{theorem}
$2^{\frac{n}{2}} + 2^{\frac{n}{4}+1} \leq \lambda(n) \leq 2^{\frac{n}{2}+1}$ when $n$ is even.
\end{theorem}

\begin{proof}
First we show that $\lambda(n)\leq 2\maxbicliques(n)$ for any integer $n$. To see this, observe that for any bipartite graph $G$ with $n$ vertices and or any new edge set $H$ it must be true that $|\bicliques(G)|\leq \maxbicliques(n)$, and $|\bicliques(G+H)|\leq \maxbicliques(n)$ from Theorem~\ref{thm1}. Since $|\Upsilon^{new}(G,G+H)|\leq |\bicliques(G+H)|$ and $|\Upsilon^{del}(G,G+H)|\leq |\bicliques(G)|$, it follows that $|\Upsilon(G,G+H)| = |\Upsilon^{new}(G,G+H)| + |\Upsilon^{del}(G,G+H)|\leq |\bicliques(G+H)| + |\bicliques(G)| \leq 2\maxbicliques(n) = 2^{\frac{n}{2}+1}$. 

We now show that there exists an $n$ vertex bipartite graph $G$ and a new edge set $H$ so that the size of $|\Upsilon(G,G+H)|$ is large. Bipartite graph $G = (L, R, E)$ with $|L| = |R|$ on $|L| + |R| = n$ vertices is constructed in the following manner. Choose $\frac{\epsilon}{2}$ vertices of $L$ into set $U_1$, and $\frac{\epsilon}{2}$ vertices of $R$ into set $V_1$ where $\epsilon \geq 2$. Let $U_2 = L\setminus U_1$ and $V_2 = R\setminus V_1$. Now we construct edges of $G$ as follows (Fig.~\ref{fig:size-of-change-1}):
\begin{itemize}
	\item Connect edges among vertices in $U_2$ and $V_2$ so that the induced graph on $U_2\cup V_2$ becomes $CP(\frac{n-\epsilon}{2})$. Let $G_2$ denote this induced subgraph on $U_2\cup V_2$ which has $\maxbicliques(n-\epsilon)$ maximal bicliques.
	\item Add edges among vertices $U_1$ and $V_1$ so that the bipartite subgraph induced by $U_1$ and $V_1$ becomes $CP(\frac{\epsilon}{2})$. This subgraph has $\maxbicliques(\epsilon)$ maximal bicliques. When $\epsilon = 2$, each of $U_1$ and $V_1$ consists of a single vertex and there will be no edges among vertices in $U_1$, and $V_1$. Then the subgraph will have two trivial bicliques $<\{\phi\}, V_1>$ and $<U_1, \{\phi\}>$.
	\item No edges among $U_1$ and $V_2$, among $U_2$ and $V_1$. 
\end{itemize}
The set of maximal bicliques in $G$ is same as the set of maximal bicliques in $G_2$ as the edge set of $G$ consists of edges in $G_2$ only. Hence 
\begin{eqnarray}\label{eqn:1}
	|\bicliques(G)| &= \maxbicliques(n-\epsilon)  + \maxbicliques(\epsilon)
\end{eqnarray}
Now we add a set of edges $H$ to $G$ so that $G+H$ becomes $CP(n/2)$ (Fig.~\ref{fig:size-of-change-1}).

Let $G' = G+H$. Observe that, $\bicliques(G)$ and $\bicliques(G')$ are disjoint sets. This is because, no maximal biclique in $G$ contains vertices from both $U_1\cup V_1$ and $U_2\cup V_2$ while every maximal biclique in $G'$ contains vertices from both $U_1\cup V_1$ and $U_2\cup V_2$. Hence, $\Upsilon(G,G') = \bicliques(G)\cup\bicliques(G')$, and
\begin{eqnarray}\label{eqn:2}
|\Upsilon(G,G')| &= |\bicliques(G)| + |\bicliques(G')|
\end{eqnarray}
Note that $G'$ is $CP(n/2)$ by construction. Hence, 
\begin{eqnarray}\label{eqn:3}
	|\bicliques(G')| = \maxbicliques(n)
\end{eqnarray}
Now from Equations~\ref{eqn:1}, \ref{eqn:2}, and \ref{eqn:3} we obtain
\begin{eqnarray*}
	|\Upsilon(G,G')| &= \maxbicliques(n-\epsilon) + \maxbicliques(\epsilon) + \maxbicliques(n) \\
				    &= 2^{\frac{n-\epsilon}{2}} + 2^{\frac{\epsilon}{2}} + 2^{\frac{n}{2}}
\end{eqnarray*}
The above expression is maximized at $\epsilon = \frac{n}{2}$ and by plugging this into the above equation, the expression for $|\Upsilon(G,G')|$ simplifies to $\maxbicliques(n/2)(\maxbicliques(n/2) + 2) = 2^{\frac{n}{2}} + 2^{\frac{n}{4}+1}$.
\end{proof}
}

We consider the maximum change in the set of maximal bicliques when a set of edges is added to the bipartite graph. Let $\lambda(n)$ denote the maximum size of $\Upsilon(G,G+H)$ taken over all $n$ vertex bipartite graphs $G$ and edge sets $H$. We derive the following upper bound on the maximum size of $\Upsilon(G,G+H)$ in the following Lemma:

\begin{lemma}\label{upper_bound}
$\lambda(n)\leq 2\maxbicliques(n)$.
\end{lemma}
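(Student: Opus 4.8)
The plan is to bound $|\Upsilon(G,G+H)|$ by splitting it into its two disjoint components, $\Upsilon^{new}$ and $\Upsilon^{del}$, and bounding each separately using Prisner's result (Theorem~\ref{thm1}). Recall that $\Upsilon^{new}(G,G+H) = \bicliques(G+H)\setminus\bicliques(G)$ and $\Upsilon^{del}(G,G+H) = \bicliques(G)\setminus\bicliques(G+H)$. Since these two sets are disjoint by construction (one is a subset of $\bicliques(G+H)$, the other a subset of $\bicliques(G)$, and they differ by exactly which graph the bicliques are maximal in), their union $\Upsilon(G,G+H)$ has size $|\Upsilon^{new}| + |\Upsilon^{del}|$.

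First I would observe the containments $\Upsilon^{new}(G,G+H)\subseteq \bicliques(G+H)$ and $\Upsilon^{del}(G,G+H)\subseteq \bicliques(G)$, which are immediate from the definitions as set differences. These give $|\Upsilon^{new}(G,G+H)|\leq |\bicliques(G+H)|$ and $|\Upsilon^{del}(G,G+H)|\leq |\bicliques(G)|$. Next, since both $G$ and $G+H$ are bipartite graphs on the same vertex set of $n$ vertices, Theorem~\ref{thm1} applies to each, giving $|\bicliques(G)|\leq \maxbicliques(n)$ and $|\bicliques(G+H)|\leq \maxbicliques(n)$ (here $\maxbicliques(n)=2^{n/2}$ is the maximum number of maximal bicliques on $n$ vertices).

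Combining these inequalities yields
\begin{equation*}
|\Upsilon(G,G+H)| = |\Upsilon^{new}(G,G+H)| + |\Upsilon^{del}(G,G+H)| \leq |\bicliques(G+H)| + |\bicliques(G)| \leq 2\maxbicliques(n).
\end{equation*}
Taking the maximum over all $n$-vertex bipartite graphs $G$ and edge sets $H$ gives $\lambda(n)\leq 2\maxbicliques(n)$, as required.

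This proof is essentially routine once the disjointness of the two components and the applicability of Theorem~\ref{thm1} to both graphs are noted. There is no real obstacle here; the only point requiring minor care is verifying that adding edges does not change the vertex set (so that the $n$-vertex bound from Prisner applies uniformly to $G+H$), which is guaranteed since $H\subseteq L\times R$ adds no new vertices. The upper bound is loose in the sense that it does not exploit any interaction between $G$ and $G+H$, which is consistent with the separately established tighter lower bounds (e.g.\ the $1.5\maxbicliques(n)$ construction) showing the truth lies below $2\maxbicliques(n)$.
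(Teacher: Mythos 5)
Your proposal is correct and follows essentially the same route as the paper: bound $|\Upsilon^{new}|$ by $|\bicliques(G+H)|$ and $|\Upsilon^{del}|$ by $|\bicliques(G)|$, apply Theorem~\ref{thm1} to each of the two $n$-vertex bipartite graphs, and add. The only difference is that you make the disjointness of the two components and the preservation of the vertex set explicit, which the paper leaves implicit.
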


\begin{proof}
Note that, for any bipartite graph $G$ with $n$ vertices and for any new edge set $H$ it must be true that $|\bicliques(G)|\leq \maxbicliques(n)$ and $|\bicliques(G+H)|\leq \maxbicliques(n)$. Since $|\Upsilon^{new}(G,G+H)|\leq |\bicliques(G+H)|$ and $|\Upsilon^{del}(G,G+H)|\leq |\bicliques(G)|$, it follows that $|\Upsilon(G,G+H)| \leq |\bicliques(G+H)| + |\bicliques(G)| \leq 2\maxbicliques(n)$.
\end{proof}

Next we analyze the upper bound of $|\Upsilon(G,G+e)|$ in the following when an edge $e\notin E(G)$ is added to $G$.


\begin{theorem}\label{soc:theorem1}
For an integer $n\geq 2$, a bipartite graph $G = (L, R, E)$ with $n$ vertices, and any edge $e = (u,v)\notin E(G), u\in U, v\in V$, the maximum size of $\Upsilon(G,G+e)$ is $3\maxbicliques(n-2)$, and for each even $n$, there exists a bipartite graph that achieves this bound.
\end{theorem}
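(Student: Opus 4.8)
The plan is to prove the upper bound and the matching construction separately, in both cases reducing to Prisner's bound (Theorem~\ref{thm1}) applied to a graph on $n-2$ vertices. Write $e=(u,v)$ with $u\in L$, $v\in R$, and let $W$ be the subgraph of $G$ induced by $\Gamma_G(u)\cup\Gamma_G(v)$. Since $e\notin E(G)$ we have $u\notin\Gamma_G(v)$ and $v\notin\Gamma_G(u)$, so $W$ contains neither $u$ nor $v$ and has at most $n-2$ vertices; hence $|\bicliques(W)|\le 2^{(n-2)/2}=\maxbicliques(n-2)$ by Theorem~\ref{thm1}.

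For the upper bound I would exhibit three injections into $\bicliques(W)$. First, every biclique in $\Upsilon^{new}$ contains both $u$ and $v$ (by Lemma~\ref{lem:new1} with $H=\{e\}$, each new biclique contains edge $e$); I map $(X,Y)\mapsto(X\setminus\{u\},Y\setminus\{v\})$, noting $X\setminus\{u\}\subseteq\Gamma_G(v)$ and $Y\setminus\{v\}\subseteq\Gamma_G(u)$ (edge $e$ only joins $u,v$), and I argue the image is maximal in $W$ because a vertex extending it in $W$ would, together with $u$ or $v$, extend the original in $G+e$. Second, I characterize $\Upsilon^{del}$: a maximal biclique $b'=(X,Y)$ of $G$ is subsumed exactly when it can be grown in $G+e$ using $e$, which forces either $u\in X$ with $v$ addable (call these $\Upsilon^{del}_1$) or $v\in Y$ with $u$ addable ($\Upsilon^{del}_2$); no other vertex can newly extend $b'$, since every vertex other than $u,v$ keeps the same neighborhood, so extendability by it would contradict maximality of $b'$ in $G$. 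These two types are disjoint and their union is $\Upsilon^{del}$. I then map $\Upsilon^{del}_1$ via $(X,Y)\mapsto(X\setminus\{u\},Y)$ and $\Upsilon^{del}_2$ via $(X,Y)\mapsto(X,Y\setminus\{v\})$, checking the images are maximal in $W$; the key point is that $\Gamma_G(u)$ and $\Gamma_G(v)$ already contain every vertex that could extend $b'$ in $G$, so maximality in $G$ transfers to maximality in $W$. Since $\Upsilon^{new}$, $\Upsilon^{del}_1$, and $\Upsilon^{del}_2$ are pairwise disjoint and each injects into $\bicliques(W)$, we get $|\Upsilon(G,G+e)|=|\Upsilon^{new}|+|\Upsilon^{del}_1|+|\Upsilon^{del}_2|\le 3|\bicliques(W)|\le 3\maxbicliques(n-2)$.

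For tightness with $n$ even, I would take $G=CP(n/2)$ and let $e=(u,v)$ be one of its $n/2$ missing ``partner'' edges. Here every inequality above becomes an equality: the graph $W$ induced on the remaining $n-2$ vertices is exactly $CP((n-2)/2)$, with $\maxbicliques(n-2)=2^{(n-2)/2}$ maximal bicliques. I would show directly that in $G+e$ the vertex $u$ is adjacent to all of $R$ and $v$ to all of $L$, so every maximal biclique of $G+e$ contains both $u$ and $v$; enumerating these (one vertex chosen from each surviving partner pair) gives exactly $2^{(n-2)/2}$ of them, and none lies in $\bicliques(G)$, since no maximal biclique of $CP(n/2)$ contains both partners $u,v$. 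Thus $\bicliques(G)$ and $\bicliques(G+e)$ are disjoint, so $\Upsilon^{new}=\bicliques(G+e)$ has size $2^{(n-2)/2}$ while $\Upsilon^{del}=\bicliques(G)$ has size $2^{n/2}=2\cdot 2^{(n-2)/2}$, whence $|\Upsilon(G,G+e)|=3\cdot 2^{(n-2)/2}=3\maxbicliques(n-2)$.

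The main obstacle I anticipate is the bookkeeping in the upper bound: pinning down $\Upsilon^{del}$ as the disjoint union $\Upsilon^{del}_1\cup\Upsilon^{del}_2$ (ruling out every other way a maximal biclique of $G$ could lose maximality), and verifying in each of the three maps that maximality is neither lost nor spuriously created when restricting to $\Gamma_G(u)\cup\Gamma_G(v)$. The arguments are symmetric in the two sides and in the roles of $u$ and $v$, so once one case is carried out carefully the others follow, but correctly stating the maximality-transfer claims, and handling degenerate bicliques with one empty side, is where the care is required.
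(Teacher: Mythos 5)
Your proof is correct and follows essentially the same route as the paper's: the same decomposition of $\Upsilon(G,G+e)$ into new bicliques (all containing $e$, hence at most $\maxbicliques(n-2)$ by a bijection with maximal bicliques of the neighborhood subgraph) and subsumed bicliques (each containing exactly one of $u,v$, at most $\maxbicliques(n-2)$ per case), together with the same extremal example --- the paper's construction is precisely $CP(n/2)$ with a missing partner edge added. The only cosmetic difference is that you bound the two subsumed classes via explicit injections into the maximal bicliques of the subgraph induced by $\Gamma_G(u)\cup\Gamma_G(v)$, where the paper instead passes to $G-v$ (resp.\ $G-u$) and invokes its Observation~\ref{obs:o1} on the number of maximal bicliques through a fixed vertex.
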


We prove this theorem in the following two lemmas. In Lemma~\ref{lem:l1} we prove that the size of $\Upsilon(G,G+e)$ can be as large as $3\maxbicliques(n-2)$  in Lemma~\ref{lem:l4} we prove that the size of $\Upsilon(G,G+e)$ is at most $3\maxbicliques(n-2)$.

\begin{lemma}\label{lem:l1}
For any even integer $n>2$ there exists a bipartite graph $G$ on $n$ vertices and an edge $e=(u,v)\notin E(G)$ such that $|\Upsilon(G,G+e)|=3\maxbicliques(n-2)$.
\end{lemma}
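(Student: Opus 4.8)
The plan is to exhibit a single explicit extremal graph rather than argue abstractly. I take $G$ to be the cocktail-party graph $CP(n/2)$ on the labeled vertex set $\{a_1,\dots,a_{n/2}\}\cup\{b_1,\dots,b_{n/2}\}$ (edges $(a_i,b_j)$ for all $i\neq j$), and I let $e=(a_{n/2},b_{n/2})$ be one of its missing ``diagonal'' non-edges, so that $G+e$ is $CP(n/2)$ with a single diagonal restored. By Theorem~\ref{thm1}, $|\bicliques(G)|=\maxbicliques(n)=2^{n/2}$. The whole argument then reduces to two facts: (i) an exact count of $\bicliques(G+e)$, and (ii) disjointness of $\bicliques(G)$ and $\bicliques(G+e)$. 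Given these, $\Upsilon^{new}=\bicliques(G+e)$ and $\Upsilon^{del}=\bicliques(G)$, so $|\Upsilon(G,G+e)|=|\bicliques(G)|+|\bicliques(G+e)|$.

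For the count of $\bicliques(G+e)$, I would observe that restoring $(a_{n/2},b_{n/2})$ makes $a_{n/2}$ adjacent to every vertex of the right partition and $b_{n/2}$ adjacent to every vertex of the left partition, i.e.\ both become universal within their partitions. Hence every maximal biclique of $G+e$ must contain both $a_{n/2}$ and $b_{n/2}$ (either one can always be inserted into the opposite, nonempty side), and in particular $G+e$ has no maximal biclique with an empty side. Deleting these two universal vertices leaves the induced graph $CP(n/2-1)$ on the remaining $n-2$ vertices, whose maximal bicliques are indexed by subsets $S\subseteq\{1,\dots,n/2-1\}$ via $(\{a_i:i\in S\},\{b_j:j\notin S\})$. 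Appending $a_{n/2}$ and $b_{n/2}$ to each yields exactly the maximal bicliques of $G+e$, so $|\bicliques(G+e)|=2^{n/2-1}=\maxbicliques(n-2)$.

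Disjointness follows from the same structural observation. Every maximal biclique of $G+e$ places the index $n/2$ on both sides (it contains both $a_{n/2}$ and $b_{n/2}$), whereas in $G=CP(n/2)$ the vertices $a_{n/2}$ and $b_{n/2}$ are non-adjacent and can never lie in a common biclique; equivalently, each maximal biclique of $CP(n/2)$ places every index on exactly one side. Hence the two families share no member. Combining, $|\Upsilon(G,G+e)|=\maxbicliques(n)+\maxbicliques(n-2)=2^{n/2}+2^{n/2-1}=3\cdot 2^{n/2-1}=3\maxbicliques(n-2)$, as required, and the construction is valid for every even $n>2$ since it only requires $n/2\geq 2$.

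I expect the main obstacle to be the exact count of $\bicliques(G+e)$: making the ``universal vertex'' reduction rigorous and staying consistent about the treatment of trivial (empty-side) bicliques, since the target $3\maxbicliques(n-2)$ is an exact equality rather than an asymptotic bound, so an off-by-a-constant miscount would break it. The disjointness step is conceptually the cleanest, but it is essential: without it one could only add the two sizes after subtracting a possible overlap, and the whole point of the construction is that placing the pair $(a_{n/2},b_{n/2})$ on both sides forces the two biclique families to be disjoint.
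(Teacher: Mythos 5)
Your proof is correct and uses essentially the same construction as the paper: the paper builds $G$ from $CP(\frac{n}{2}-1)$ by attaching $u$ to all of $R'$ and $v$ to all of $L'$, which is exactly $CP(n/2)$ with $e=(u,v)$ the missing diagonal you identify, and both arguments then rest on the same two facts (the count $|\bicliques(G+e)|=\maxbicliques(n-2)$ via the bijection with $\bicliques(CP(\frac{n}{2}-1))$, and disjointness because every maximal biclique of $G+e$ contains both $u$ and $v$ while none of $G$'s does). The only cosmetic difference is that you invoke Prisner's theorem for $|\bicliques(G)|=2^{n/2}$ where the paper counts it directly as $2\maxbicliques(n-2)$ by splitting on whether a maximal biclique contains $u$ or $v$.
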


\begin{proof}
We use proof by construction. Consider bipartite graph $G=(L, R, E)$ constructed on vertex set $U\cup V$ with $n$ vertices such that $|L| = |R| = n/2$. Let $u\in L$ and $v\in R$ be two vertices and let $L'= L\setminus\{u\}$ and $R'=R\setminus\{v\}$. Let $G''$ denote the induced subgraph of $G$ on vertex sets $L'$ and $R'$. In our construction, $G''$ is $CP(\frac{n}{2}-1)$. In graph $G$, in addition to the edges in $G''$, we add an edge from each vertex
in $R'$ to $u$ and an edge from each vertex in $L'$ to $v$. We add edge $e= (u,v)$ to $G$ to get graph $G' = G+e$ (see Fig.~\ref{fig:size-of-change-2} for construction). We claim that the size of $\Upsilon(G,G')$ is $3\maxbicliques(n-2)$.

First, we note that the total number of maximal bicliques in $G$ is $2\maxbicliques(n-2)$. Each maximal biclique in $G$ contains either vertex $u$ or $v$, but not both. The number of maximal bicliques that contain vertex $u$ is $\maxbicliques(n-2)$, since each maximal biclique in $G''$ leads to a maximal biclique in $G$ by adding $u$. Similarly, the number of maximal bicliques in $G$ that contains $v$ is $\maxbicliques(n-2)$, leading to a total of $2\maxbicliques(n-2)$ maximal bicliques in $G$.

Next, we note that the total number of maximal bicliques in $G'$ is $\maxbicliques(n-2)$. To see this, note that each maximal biclique in $G'$ contains both vertices $u$ and $v$. Further, for each maximal biclique in $G''$, we get a corresponding maximal biclique in $G'$ by adding vertices $u$ and $v$. Hence the number of maximal bicliques in $G'$ equals the number of maximal bicliques in $G''$, which is $\maxbicliques(n-2)$.

No maximal biclique in $\bicliques(G)$ contains both $u$ and $v$, while every maximal biclique in $G'$ contains both $u$ and $v$. Hence, $\bicliques(G)$ and $\bicliques(G')$ are disjoint sets, and $|\Upsilon(G,G')| = |\bicliques(G)| + |\bicliques(G')| = 3\maxbicliques(n-2)$.
\end{proof}

\begin{figure}[t!]
\centering
\begin{tabular}{c}
\includegraphics[width=2.5in]{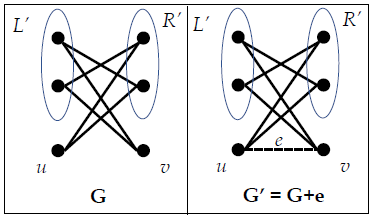}\\
\end{tabular}
\caption{Construction showing the changes in the set of maximal bicliques when a new edge is added. $G$ is in the left on $n = 6$ vertices. $G''$ consists of vertices in $L'$ and $R'$ and edges among them to make it a cocktail-party graph. $G'$ in the right is obtained by adding edge $e = (u,v)$ to $G$.}
\label{fig:size-of-change-2}
\end{figure}
Now we will prove a few results that we will use in proving Lemma~\ref{lem:l4}.
\begin{lemma}\label{lem:l2}
If $e=(u,v)\notin E(G)$ is inserted to $G$ where $u\in L, v\in R$, all new maximal bicliques in $G+e$ must contain $e$. 
\end{lemma}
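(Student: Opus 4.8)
The plan is to argue by contradiction, exploiting the fact that $G+e$ differs from $G$ in exactly one edge. Let $b = (X,Y)$ with $X \subseteq L$, $Y \subseteq R$ be an arbitrary new maximal biclique in $G+e$, that is, $b \in \bicliques(G+e) \setminus \bicliques(G)$. Recall that ``$b$ contains $e$'' means that $u \in X$ and $v \in Y$, so that the edge $e=(u,v)$ is one of the edges of the complete bipartite structure realized by $b$. I would suppose, toward a contradiction, that $b$ does \emph{not} contain $e$, i.e. it is not the case that both $u \in X$ and $v \in Y$.

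The crucial observation is that $E(G+e) = E(G) \cup \{e\}$, so $e$ is the only edge present in $G+e$ but absent in $G$. Since every edge of the biclique $b$ lies in the complete set $\{(x,y) : x \in X,\, y \in Y\}$, and by assumption $e$ is not among these edges, it follows that every edge of $b$ already belongs to $E(G)$. Hence $b$ is a biclique in $G$ as well. This is the first key step: not using $e$ forces $b$ to live entirely inside $G$.

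The second key step is a maximality-transfer argument. I would show that $b$, being maximal in $G+e$, must also be maximal in $G$. Indeed, suppose $b$ were not maximal in $G$; then some biclique $\hat b$ in $G$ would properly contain $b$. But $G$ is a subgraph of $G+e$, so $\hat b$ is also a biclique in $G+e$ properly containing $b$, contradicting the maximality of $b$ in $G+e$. Therefore $b$ is maximal in $G$, i.e. $b \in \bicliques(G)$, which contradicts $b \in \bicliques(G+e)\setminus\bicliques(G)$. Consequently every new maximal biclique must contain $e$.

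I do not expect any serious obstacle here, since both steps reduce to the single-edge difference between $G$ and $G+e$ and the monotonicity of the biclique relation under taking subgraphs. The only point demanding care is the precise meaning of ``contains $e$'': one must handle uniformly the sub-cases where $u \notin X$ or $v \notin Y$ (or both), all of which fall under the single statement that $e$ is not an edge of $b$, ensuring that the argument that $b$ is a biclique of $G$ goes through in each case.
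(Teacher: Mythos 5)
Your proposal is correct and follows essentially the same route as the paper: assume a new maximal biclique $b$ omits $e$, observe that the single-edge difference forces $b$ to be a biclique of $G$, and transfer maximality between $G$ and $G+e$ to reach a contradiction. The paper phrases the final step by extracting an extending vertex $w$ in $G$ and reusing it in $G+e$, while you argue the contrapositive direction of the same maximality transfer; the two are interchangeable.
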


\begin{proof}
Proof by contradiction. Assume that there is a new maximal biclique $b = (b_1, b_2)$ in $\bicliques(G+e)-\bicliques(G)$ that does not contain $e$. Then $b$ must be present in $G$ but is not maximal in $G$, and there must be another vertex $w\in L$ (or $R$) that can be added to $b$ while remaining a biclique. Clearly, $w$ can be added to biclique $b$ in $G+e$ also, so that $b$ is not maximal in $G+e$, contradicting our assumption.
\end{proof}
\begin{lemma}\label{lem:l3}
If $e = (u,v)$ is added to $G$, each biclique $b\in\bicliques(G)-\bicliques(G+e)$ contains either $u$ or $v$. 
\end{lemma}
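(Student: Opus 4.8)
The plan is to argue directly from the maximality of $b$ in $G$ together with its non-maximality in $G+e$. Fix $b=(b_1,b_2)\in\bicliques(G)-\bicliques(G+e)$. First I would observe that adding an edge never destroys an existing biclique, so $b$ is still a biclique in $G+e$; since $b\notin\bicliques(G+e)$, it must have lost maximality. This means there is a single vertex $w$ that can be appended to $b_1$ or to $b_2$ to yield a strictly larger biclique in $G+e$. (If some proper superbiclique $b'$ of $b$ exists in $G+e$, then any $w\in b'\setminus b$ works, because $w$ is joined in $G+e$ to the entire opposite side of $b'$, hence to the entire opposite side of $b$.)

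Next I would use the fact that $b$ is maximal in $G$ to locate the new edge. That same vertex $w$ cannot extend $b$ in $G$, so the extension in $G+e$ must rely on an edge present in $G+e$ but absent from $G$; the only such edge is $e=(u,v)$. Concretely, if $w$ is appended to $b_1$, then $w$ is adjacent in $G+e$ to every vertex of $b_2$, and at least one of these adjacencies, say to some $x\in b_2$, is new, forcing $\{w,x\}=\{u,v\}$. The symmetric statement holds if $w$ is appended to $b_2$.

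The conclusion then follows by a two-case split on the side that $w$ joins. If $w\in L$ is appended to $b_1$, then $w=u$ and $x=v$, so $v\in b_2\subseteq b$; if $w\in R$ is appended to $b_2$, then $w=v$ and $x=u$, so $u\in b_1\subseteq b$. In either case $b$ contains $u$ or $v$, which is exactly the claim. I expect the only delicate point to be the reduction to a single-vertex extension and the bookkeeping that the lone new edge $e$ is necessarily the one enabling that extension; this parallels the contradiction argument already used for new bicliques in Lemma~\ref{lem:l2}, so no additional machinery is required.
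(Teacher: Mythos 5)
Your proof is correct. It rests on the same core observation as the paper's --- any extension of $b$ in $G+e$ must be enabled by the single edge present in $G+e$ but absent from $G$ --- but it is organized differently. The paper argues by contradiction: it assumes $b$ avoids both $u$ and $v$, embeds $b$ in a maximal superbiclique $b'$ of $G+e$, invokes Lemma~\ref{lem:l2} to conclude that $b'$ contains both $u$ and $v$, and then shows that $u$ alone would already extend $b$ in $G$ (since $b$ avoids $v$, none of the adjacencies from $u$ into $b$ can be the new edge), contradicting maximality of $b$ in $G$. You argue directly: you reduce non-maximality in $G+e$ to a single-vertex extension by some $w$, use maximality of $b$ in $G$ to force at least one adjacency from $w$ into $b$ to be new, and identify that adjacency with $e$, which places the other endpoint of $e$ inside $b$. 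Your route is self-contained (it does not depend on Lemma~\ref{lem:l2}) and additionally pins down which endpoint of $e$ lies in $b$ according to which side is extended; the paper's route is shorter because it reuses Lemma~\ref{lem:l2}. The one step worth spelling out in your write-up is the justification that the single-vertex extension $(b_1\cup\{w\},b_2)$ is itself a biclique of $G+e$ properly containing $b$ (so that maximality of $b$ in $G$ applies to it), but you do supply exactly this via the parenthetical remark about $w$ being joined to the entire opposite side of $b'$, so there is no gap.
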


\begin{proof}
Proof by contradiction. Suppose there is maximal biclique $b=(b_1, b_2)$ in $\bicliques(G)-\bicliques(G+e)$ that contain neither $u$ nor $v$. Then, $b$ must be maximal biclique in $G$. Since $b$ is not maximal biclique in $G+e$, $b$ is contained in another maximal biclique $b' = (b_1', b_2')$ in $G+e$. From Lemma~\ref{lem:l2}, $b'$ must contain edge $e=(u,v)$, and hence, both vertices $u$ and $v$. Since $b'$ is a biclique, every vertex in $b_2'$ is connected to $u$ in $G'$. Hence, every vertex in $b_2$ is connected to $u$ even in $G$. Therefore, $b\cup\{u\}$ is a biclique in $G$, and $b$ is not maximal in $G$, contradicting our assumption.
\end{proof}

\begin{observation}\label{obs:o1}
For a bipartite graph $G=(L, R, E)$ and a vertex $u\in V(G)$, the number of maximal bicliques that contains $v$ is at most $\maxbicliques(n-1)$.
\end{observation}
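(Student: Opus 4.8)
The plan is to exhibit an injection from the set of maximal bicliques of $G$ that contain $u$ into the set of maximal bicliques of a bipartite graph on at most $n-1$ vertices, and then invoke Theorem~\ref{thm1}. Assume without loss of generality that $u\in L$. Any maximal biclique $(X,Y)$ of $G$ with $u\in X$ must satisfy $Y\subseteq\Gamma(u)$, since every vertex of $Y$ is adjacent to $u$. This suggests working inside the localized subgraph $G_u$, the subgraph of $G$ induced by the vertex set $(L\setminus\{u\})\cup\Gamma(u)$; note that $G_u$ has $(|L|-1)+|\Gamma(u)|\le n-1$ vertices.

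First I would record the closure characterization of maximality: $(X,Y)$ is a maximal biclique of $G$ iff $X=\{w\in L:\Gamma(w)\supseteq Y\}$ and $Y=\{z\in R:\Gamma(z)\supseteq X\}$. Using this, I would define the map $(X,Y)\mapsto(X\setminus\{u\},\,Y)$ and argue it lands in $\bicliques(G_u)$. The pair $(X\setminus\{u\},Y)$ is clearly a biclique of $G_u$, because $X\setminus\{u\}\subseteq L\setminus\{u\}$, $Y\subseteq\Gamma(u)$, and all required edges are inherited from $G$. The map is patently injective, since $X$ and $Y$ are recovered as $(X\setminus\{u\})\cup\{u\}$ and $Y$.

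The crux is verifying that $(X\setminus\{u\},Y)$ is maximal in $G_u$, which I would do by checking both closures computed within $G_u$. For the left side, for $w\in L\setminus\{u\}$ one has $\Gamma_{G_u}(w)=\Gamma(w)\cap\Gamma(u)$, and since $Y\subseteq\Gamma(u)$ the condition $\Gamma_{G_u}(w)\supseteq Y$ is equivalent to $\Gamma(w)\supseteq Y$; hence the left closure in $G_u$ is exactly $X\setminus\{u\}$. For the right side --- and this is the step that forces the restriction to $\Gamma(u)$ rather than to all of $R$ --- for $z\in\Gamma(u)$ one has $z\in Y \iff \Gamma(z)\supseteq X \iff \Gamma(z)\supseteq X\setminus\{u\}$, the adjacency $z\sim u$ being automatic from $z\in\Gamma(u)$, so the right closure in $G_u$ is again exactly $Y$. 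Had I instead simply deleted $u$ and kept all of $R$, the right closure could strictly enlarge $Y$ (by admitting vertices of $R$ nonadjacent to $u$), and the map would fail to be injective; localizing the right side to $\Gamma(u)$ is precisely what repairs this, so I expect this to be the main obstacle.

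Finally I would conclude that the number of maximal bicliques of $G$ containing $u$ is at most $|\bicliques(G_u)|$, and since $G_u$ has at most $n-1$ vertices, Theorem~\ref{thm1} together with the monotonicity of $\maxbicliques(\cdot)$ yields $|\bicliques(G_u)|\le\maxbicliques(n-1)$. The monotonicity of $\maxbicliques$ (padding with isolated vertices cannot decrease the number of maximal bicliques) is routine and can be stated in a line.
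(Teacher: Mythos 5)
Your proof is correct, and it follows the same high-level idea as the paper's proof---project each maximal biclique containing $u$ into a graph on at most $n-1$ vertices and invoke the bound of Theorem~\ref{thm1}---but the target of your projection is different, and that difference matters. The paper maps $b\mapsto b\setminus\{u\}$ into $G-\{u\}$ (keeping all of $R$) and asserts that the image is a maximal biclique there; taken literally this is false. For example, with $L=\{u,w\}$, $R=\{y,z\}$ and edges $uy$, $wy$, $wz$, the maximal biclique $(\{u,w\},\{y\})$ maps to $(\{w\},\{y\})$, which is not maximal in $G-\{u\}$ since it extends to $(\{w\},\{y,z\})$. Your version repairs exactly this: by restricting the right side to $\Gamma(u)$, the vertices of $R$ that could enlarge $Y$ after $u$ is deleted are excluded from the ambient graph, and your closure computation shows that $(X\setminus\{u\},Y)$ really is maximal in $G_u=G[(L\setminus\{u\})\cup\Gamma(u)]$, with injectivity immediate. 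The only price is that $G_u$ has \emph{at most} rather than exactly $n-1$ vertices, so you need the (routine) monotonicity of $\maxbicliques(\cdot)$, which the paper's formulation avoids. In short, your argument proves the observation; the paper's one-line argument, as written, has a gap that your localization to $\Gamma(u)$ closes.
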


\begin{proof}
Suppose, $u\in L$. Then each maximal biclique $b$ in $G$ that contains $u$, corresponds to a unique maximal biclique in $G-\{u\}$. Such maximal bicliques can be derived from $b$ by deleting $u$ from $b$. As the maximum number of maximal bicliques in $G-\{u\}$ is $\maxbicliques(n-1)$, maximum number of maximal bicliques in $G$ can be no more than $\maxbicliques(n-1)$.
\end{proof}
\begin{observation}\label{obs:o2}
The number of maximal bicliques containing a specific edge $(u,v)$ is at most $\maxbicliques(n-2)$.
\end{observation}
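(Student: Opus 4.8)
The plan is to count the maximal bicliques of $G$ that contain the edge $(u,v)$ by exhibiting an injection from this family into the maximal bicliques of a suitable graph on at most $n-2$ vertices, and then invoking the definition and monotonicity of $\maxbicliques$.

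First I would record the structural constraints forced by containing $(u,v)$. Since $u\in L$ and $v\in R$, any maximal biclique $b=(X,Y)$ of $G$ that contains the edge $(u,v)$ must have $u\in X$ and $v\in Y$. Because $b$ is a biclique, every vertex of $X$ is adjacent to $v$ and every vertex of $Y$ is adjacent to $u$, so $X\subseteq\Gamma(v)$ and $Y\subseteq\Gamma(u)$. This motivates defining the reduced graph $G^{*}$ as the subgraph of $G$ induced by $(\Gamma(v)\setminus\{u\})\cup(\Gamma(u)\setminus\{v\})$. Since $\Gamma(v)\setminus\{u\}\subseteq L\setminus\{u\}$ and $\Gamma(u)\setminus\{v\}\subseteq R\setminus\{v\}$, the graph $G^{*}$ has at most $(|L|-1)+(|R|-1)=n-2$ vertices.

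Next I would define the map $\phi(b)=(X\setminus\{u\},\,Y\setminus\{v\})$ and argue that it sends each maximal biclique of $G$ containing $(u,v)$ to a maximal biclique of $G^{*}$, injectively. Well-definedness into bicliques of $G^{*}$ follows from $X\setminus\{u\}\subseteq\Gamma(v)\setminus\{u\}$ and $Y\setminus\{v\}\subseteq\Gamma(u)\setminus\{v\}$ together with the fact that all the required edges already lie in $G$. The crux is maximality: if $\phi(b)$ could be extended in $G^{*}$ by some $w\in\Gamma(v)\setminus\{u\}$ adjacent to all of $Y\setminus\{v\}$, then, because $w\in\Gamma(v)$, the vertex $w$ is also adjacent to $v$, hence to all of $Y$, so $(X\cup\{w\},Y)$ would be a biclique of $G$ strictly containing $b$, contradicting maximality of $b$ in $G$; the symmetric argument handles extensions on the $R$-side. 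Injectivity is immediate, since $u$ and $v$ are reinserted deterministically: if $\phi(b_1)=\phi(b_2)$ then, using $u\in X_1\cap X_2$ and $v\in Y_1\cap Y_2$, we recover $b_1=b_2$.

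Finally I would conclude that the number of maximal bicliques of $G$ containing $(u,v)$ is at most $|\bicliques(G^{*})|$, and by the definition of $\maxbicliques$ together with its monotonicity (obtained by padding with isolated vertices, and made explicit by Theorem~\ref{thm1}) this is at most $\maxbicliques(|V(G^{*})|)\le\maxbicliques(n-2)$. The main obstacle I anticipate is precisely the maximality-transfer step: the naive reduction of deleting $u$ and $v$ from the whole graph, i.e.\ working in $G-\{u,v\}$, fails, because a vertex adjacent to $Y\setminus\{v\}$ but not to $v$ could extend $\phi(b)$ in $G-\{u,v\}$ without extending $b$ in $G$, so the images need not be maximal there. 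Restricting the vertex set of $G^{*}$ to the neighborhoods $\Gamma(v)$ and $\Gamma(u)$ is exactly what rules this out, which is why the reduced graph must be the neighborhood-induced subgraph rather than $G-\{u,v\}$.
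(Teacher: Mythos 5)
Your proposal is correct and follows essentially the same route as the paper: both pass to the subgraph induced by $(\Gamma(u)\cup\Gamma(v))\setminus\{u,v\}$, which has at most $n-2$ vertices, and put the maximal bicliques of $G$ containing $(u,v)$ in correspondence with maximal bicliques of that subgraph by deleting and reinserting $u$ and $v$. Your explicit verification of the maximality-transfer step, and your remark on why $G-\{u,v\}$ would not work, make the argument more careful than the paper's, but the underlying idea is identical.
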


\begin{proof}
Consider an edge $(u,v)\in E(G)$. Let vertex set $V' = (\Gamma_G(u)\cup\Gamma_G(v)) - \{u,v\}$, and let $G'$ be the subgraph of $G$ induced by $V'$. Each maximal biclique $b$ in $G$ that contains edge $(u,v)$ corresponds to a unique maximal biclique in $G'$ by simply deleting vertices $u$ and $v$ from $b$. Also, each maximal biclique $b'$ in $G'$ corresponds to a unique maximal biclique in $G$ that contains $(u,v)$ by adding vertices $u$ and $v$ to $b'$. Thus, there is a bijection between the
maximal bicliques in $G'$ and the set of maximal bicliques in $G$ that contains edge $(u,v)$. The number of maximal bicliques in $G'$ can be at most $\maxbicliques(n-2)$ since $G'$ has no more than $(n-2)$ vertices, completing the proof. 
\end{proof}

\begin{lemma}\label{lem:l4}
For a bipartite graph $G=(L, R, E)$ on $n$ vertices and edge $e=(u,v)\notin E(G)$, the size of $\Upsilon(G,G+e)$ can be no larger than $3\maxbicliques(n-2)$.
\end{lemma}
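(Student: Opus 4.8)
The plan is to bound $|\Upsilon(G,G+e)|$ by treating the new and the subsumed bicliques separately, exploiting that both families are pinned to the neighborhood of the single edge $e=(u,v)$. Since $\Upsilon^{new}(G,G+e)$ and $\Upsilon^{del}(G,G+e)$ are disjoint by definition, it suffices to establish $|\Upsilon^{new}(G,G+e)|\le \maxbicliques(n-2)$ and $|\Upsilon^{del}(G,G+e)|\le 2\maxbicliques(n-2)$ and add them.

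First I would dispatch the new bicliques, which is immediate from the tools already in hand. By Lemma~\ref{lem:l2} every biclique in $\Upsilon^{new}(G,G+e)$ contains the edge $e$, and by Observation~\ref{obs:o2} the number of maximal bicliques of a graph containing a fixed edge is at most $\maxbicliques(n-2)$. Hence $|\Upsilon^{new}(G,G+e)|\le\maxbicliques(n-2)$.

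The subsumed bicliques are the crux. By Lemma~\ref{lem:l3} each $b\in\Upsilon^{del}(G,G+e)$ contains $u$ or $v$; moreover, since $G$ is bipartite with $u\in L$, $v\in R$ and $e\notin E(G)$, no biclique of $G$ can contain both $u$ and $v$, so each subsumed biclique contains \emph{exactly} one of them, and I may write $\Upsilon^{del}(G,G+e)=\Upsilon^{del}_u\sqcup\Upsilon^{del}_v$ accordingly. The key structural step is to exhibit an injection from $\Upsilon^{del}_u$ into $\Upsilon^{new}(G,G+e)$. For $b\in\Upsilon^{del}_u$, maximality in $G$ together with non-maximality in $G+e$ forces a maximal biclique $b'$ of $G+e$ with $b\subsetneq b'$; since two maximal bicliques of $G$ cannot be nested, $b'$ is not maximal in $G$, so $b'\in\Upsilon^{new}(G,G+e)$ and, by Lemma~\ref{lem:l2}, $b'$ contains $e$. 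Because the only edge of $G+e$ absent from $G$ is $e$ and $b$ does not contain $v$, $b$ is a biclique of the graph $b'-e$, and it is in fact maximal there (any extension inside $b'-e\subseteq G$ would extend $b$ in $G$, contradicting maximality). A complete bipartite graph with one edge removed has exactly the two maximal bicliques $b'\setminus\{u\}$ and $b'\setminus\{v\}$; as $b$ contains $u$, I conclude $b=b'\setminus\{v\}$. This identifies $b'$ uniquely as the only maximal biclique of $G+e$ whose deletion of $v$ returns $b$, so $b\mapsto b'$ is well defined and injective, giving $|\Upsilon^{del}_u|\le|\Upsilon^{new}(G,G+e)|\le\maxbicliques(n-2)$. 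The symmetric argument (with the roles of $u$ and $v$ swapped, using $b=b'\setminus\{u\}$) gives $|\Upsilon^{del}_v|\le\maxbicliques(n-2)$, hence $|\Upsilon^{del}(G,G+e)|\le 2\maxbicliques(n-2)$, and summing the bounds yields $|\Upsilon(G,G+e)|\le 3\maxbicliques(n-2)$.

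The main obstacle is the bound on $\Upsilon^{del}$. The naive route via Observation~\ref{obs:o1} (each subsumed biclique contains $u$ or $v$, and each vertex lies in at most $\maxbicliques(n-1)$ maximal bicliques) only yields $2\maxbicliques(n-1)=\sqrt{2}\cdot 2\maxbicliques(n-2)$, which is too weak by a $\sqrt{2}$ factor. The fix is the observation that every subsumed biclique sits exactly one vertex below a \emph{new} biclique, so that $|\Upsilon^{del}|$ is governed by $|\Upsilon^{new}|$ rather than by vertex degrees; the delicate point to verify carefully is that $b'-e$ has precisely the two maximal bicliques $b'\setminus\{u\}$ and $b'\setminus\{v\}$ and that this pins down $b'$ uniquely, which is what makes the map injective rather than merely at-most-two-to-one.
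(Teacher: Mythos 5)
Your proof is correct, and your bound on $\Upsilon^{new}$ (via Lemma~\ref{lem:l2} and Observation~\ref{obs:o2}) is exactly the paper's Case~1. Where you diverge is the bound $|\Upsilon^{del}|\leq 2\maxbicliques(n-2)$. The paper's route is in fact the ``naive'' one you dismissed, just applied one level down: a subsumed biclique $b$ containing $u$ but not $v$ remains a \emph{maximal} biclique of the $(n-1)$-vertex graph $G-v$, so Observation~\ref{obs:o1} applied to $G-v$ (not to $G$) already gives at most $\maxbicliques(n-2)$ such bicliques, and symmetrically for those containing $v$ --- there is no lost $\sqrt{2}$ factor. Your alternative is the injection $b\mapsto b\cup\{v\}$ from $\Upsilon^{del}_u$ into $\Upsilon^{new}$, justified by showing that $b$ must be maximal in $b'-e$ and that $b'-e$ has only the two maximal bicliques $b'\setminus\{u\}$ and $b'\setminus\{v\}$; all of these steps check out (and the two-maximal-bicliques fact is the same one the paper uses in Lemma~\ref{lem:sub1}). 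What your route buys is the stronger structural statement $|\Upsilon^{del}(G,G+e)|\leq 2\,|\Upsilon^{new}(G,G+e)|$, i.e., the whole change is controlled by the number of new bicliques alone, which is in the spirit of the change-sensitivity analysis of $\cssubb$ elsewhere in the paper; what it costs is a longer argument than the paper's two-line counting step, and it sidesteps rather than exploits Observation~\ref{obs:o1}.
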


\begin{proof}
Proof by contradiction. Suppose there exists a bipartite graph $G=(L, R, E)$ and edge $e\notin E(G)$ such that $|\Upsilon(G,G+e)|\ge 3\maxbicliques(n-2)$. Then either $|\bicliques(G+e)-\bicliques(G)|\ge \maxbicliques(n-2)$ or $|\bicliques(G)-\bicliques(G+e)|\ge 2\maxbicliques(n-2)$.

\textbf{Case 1:} $|\bicliques(G+e)-\bicliques(G)|\ge \maxbicliques(n-2)$: This means that total number of new maximal bicliques formed due to addition of edge $e$ is larger than $\maxbicliques(n-2)$. From Lemma~\ref{lem:l2}, each new maximal biclique formed due to addition of $e$ must contain $e$. From Observation~\ref{obs:o2}, the total number of maximal bicliques in an $n$ vertex bipartite graph containing a specific edge can be at most $\maxbicliques(n-2)$. Thus, the number of new maximal bicliques after
adding edge $e$ is at most $\maxbicliques(n-2)$, contradicting our assumption.

\textbf{Case 2:} $|\bicliques(G)-\bicliques(G+e)|\ge 2\maxbicliques(n-2)$: Using Lemma~\ref{lem:l3}, each maximal biclique $b\in\bicliques(G)-\bicliques(G+e)$ must contain either $u$ or $v$, but not both. Suppose that $b$ contains $u$ but not $v$. Then, $b$ must be a maximal biclique in $G-v$. Using Observation~\ref{obs:o1}, we see that the number of maximal bicliques in $G-v$ that contains a specific vertex $u$ is no more than $\maxbicliques(n-2)$. In a similar way, the number of
possible maximal bicliques that contain $v$ is at most $\maxbicliques(n-2)$. Therefore, the total number of maximal bicliques in $\bicliques(G)-\bicliques(G+e)$ is at most $2\maxbicliques(n-2)$, contradicting our assumption.
\end{proof}

Combining Lemma~\ref{upper_bound}, Theorem~\ref{soc:theorem1} and using the fact that $3\maxbicliques(n-2) = 1.5\maxbicliques(n)$ for even $n$, we obtain the following when $n$ is even:
\begin{theorem}
$1.5\maxbicliques(n) \leq \lambda(n) \leq 2\maxbicliques(n)$
\end{theorem}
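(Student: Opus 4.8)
The plan is to assemble the final statement directly from results already proved, since both inequalities have essentially been established. For the upper bound I would simply invoke Lemma~\ref{upper_bound}, which asserts $\lambda(n) \leq 2\maxbicliques(n)$ for every $n$ (even or odd) with no additional argument needed; the bound comes immediately from $|\bicliques(G)|, |\bicliques(G+H)| \leq \maxbicliques(n)$ together with $|\Upsilon^{new}| \leq |\bicliques(G+H)|$ and $|\Upsilon^{del}| \leq |\bicliques(G)|$.

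For the lower bound, the key observation is that a single-edge insertion is a special case of an arbitrary edge-set insertion. Since $\lambda(n)$ is defined as the maximum of $|\Upsilon(G, G+H)|$ over \emph{all} graphs $G$ and \emph{all} edge sets $H$, restricting to $H = \{e\}$ shows that $\lambda(n) \geq \max_{G,\,e \notin E(G)} |\Upsilon(G, G+e)|$. Lemma~\ref{lem:l1} exhibits, for every even $n > 2$, an explicit graph $G$ and edge $e \notin E(G)$ achieving $|\Upsilon(G, G+e)| = 3\maxbicliques(n-2)$, so this witness yields $\lambda(n) \geq 3\maxbicliques(n-2)$.

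The remaining step is the arithmetic identity $3\maxbicliques(n-2) = 1.5\maxbicliques(n)$ for even $n$. This follows from Theorem~\ref{thm1}, which gives $\maxbicliques(n) = 2^{n/2}$ when $n$ is even; hence $\maxbicliques(n-2) = 2^{(n-2)/2} = \tfrac{1}{2}\maxbicliques(n)$, and multiplying by $3$ gives the claimed identity. Substituting into the previous paragraph yields $\lambda(n) \geq 1.5\maxbicliques(n)$, which combined with the upper bound completes the proof for even $n$.

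There is no genuine obstacle in this argument, as everything is inherited from earlier results; the only point requiring care is the justification that the construction of Lemma~\ref{lem:l1}, which concerns the addition of a single edge, legitimately lower-bounds $\lambda(n)$, whose definition ranges over arbitrary edge sets $H$. This is immediate once one recognizes $\{e\}$ as an admissible choice of $H$, but it is worth stating explicitly so that the reader sees why a single-edge extremal example suffices to pin down the lower bound on the general quantity $\lambda(n)$.
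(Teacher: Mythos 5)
Your proposal is correct and follows essentially the same route as the paper, which likewise combines Lemma~\ref{upper_bound} with the single-edge construction (Lemma~\ref{lem:l1}, via Theorem~\ref{soc:theorem1}) and the identity $3\maxbicliques(n-2) = 1.5\maxbicliques(n)$ for even $n$. Your explicit remark that $H=\{e\}$ is an admissible choice in the definition of $\lambda(n)$ is a point the paper leaves implicit, but it is the same argument.
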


\section{Experimental Evaluation}
\label{expt}
\newcommand{\naive}{{\tt BaselineBC}}

In this section, we present results of an experimental  evaluation of our algorithms. 
\subsection{Data}
We consider the following real-world bipartite graphs in our experiments. A summary of the datasets is presented in Table~\ref{table-1}.  In the {\tt actor-movie}~\cite{data:am16} graph, vertices consist of actors in one bipartition and movies in another bipartition. There is an edge between an actor and a movie if the actor played in that movie. In the {\tt dblp-author}~\cite{data:da16} graph, vertices consist of authors in one partition and the publications in another partition. Edges connect authors to their publications. In the {\tt epinions-rating}~\cite{data:er16} graph, vertices consist of users in one partition and products in another partition. There is an edge between a user and a product if the user rated the product. Also, the edges have timestamps of their creation. In the {\tt flickr-membership}~\cite{data:fm16} graph, vertices consists of users and groups. There is an edge between a user and a group if that user is a member of that group.

We converted the above graphs into dynamic graphs by creating edge streams as follows: For {\tt actor-movie}, {\tt dblp-author}, and {\tt flickr-membership} we created initial graphs by retaining each edge in the original graph with probability $0.1$ and deleting the rest. Then the deleted edges are added back as an edge stream, until the original graph is reached. We named the initial graphs as {\tt actor-movie-1}, {\tt dblp-author-1}, and {\tt flickr-membership-1}. For the {\tt epinions-rating} graph, we created the initial graph by retaining initial $10\%$ edges of the original graph according to their timestamps, and considered rest of the edges for creating the edge stream in timestamp ordering. We named the initial graph as {\tt epinions-rating-init}. In Table~\ref{table-1}, the number of edges of the initial graph is in the column {\tt Edges(initial)} and the number of edges when we end the experiment is in column {\tt Edges(final)}.

\begin{table*}[!t]
\small
\centering
\caption{\textbf{Summary of Graphs Used}}
\label{table-1}
\scalebox{1}{
\begin{tabular}{| l | c | c | c | c | c |}

\toprule
\textbf{Dataset} & Nodes & Edges(initial) &  Edges(final) & Edges(original graph) & Avg. deg.(original graph)\\
\midrule
{\tt actor-movie}-$1$ & $639286$ & $146917$ & $1470404$ & $1470404$ & $6$ \\
\hline
{\tt dblp-author}-$1$ & $6851776$ & $864372$ & $8649016$ & $8649016$ & $3$ \\
\hline
{\tt epinions-rating-init} & $996744$ & $1366832$ & $1631832$ & $13668320$ & $31$\\
\hline
{\tt flickr-membership}-$1$ & $895589$ & $855179$ & $1355179$ & $8545307$ & $35$ \\
\bottomrule
\end{tabular}
}
\end{table*}

\begin{table*}[t!]
\small
\centering
\caption{\textbf{Comparison with Baseline: computation time for adding a single batch of size $100$}} 
\label{table-3}
\scalebox{1}{
\begin{tabular}{|c| c | c|}
\toprule
\textbf{Initial-graph} & $\mbc$ & $\naive$ \\
\midrule
{\tt actor-movie}-$1$ & $30$ ms. & $> 30$ min. \\
\hline
{\tt dblp-author}-$1$ & $20$ ms. & $> 20$ hours \\
\hline
{\tt epinion-rating-init} & $3.5$ sec. & $> 10$ hours\\
\hline
{\tt flickr-membership}-$1$ & $0.5$ sec. & $1$ hour \\ 
\bottomrule
\end{tabular}
}
\end{table*} 

\begin{figure*}[t!]
\centering
\begin{tabular}{cc}
	\hspace{-8mm}
	\includegraphics[width=0.45\textwidth]{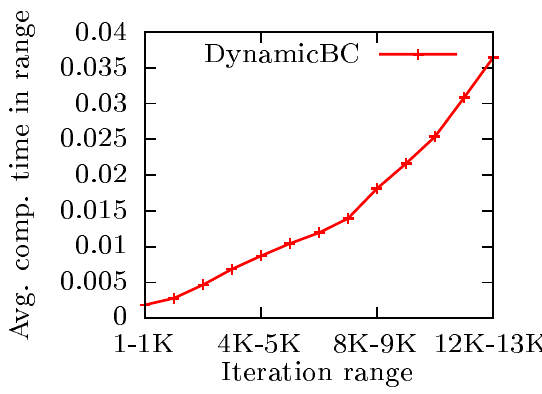} &
       \hspace{-10mm}
	\includegraphics[width=0.45\textwidth]{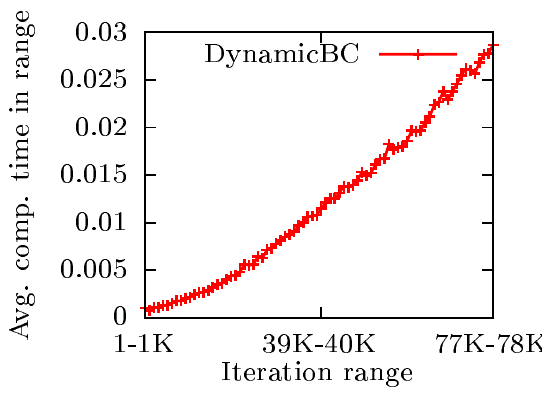} \\
		(a) {\tt actor-movie}-$1$ &
		(b) {\tt dblp-author}-$1$\\
	\includegraphics[width=0.45\textwidth]{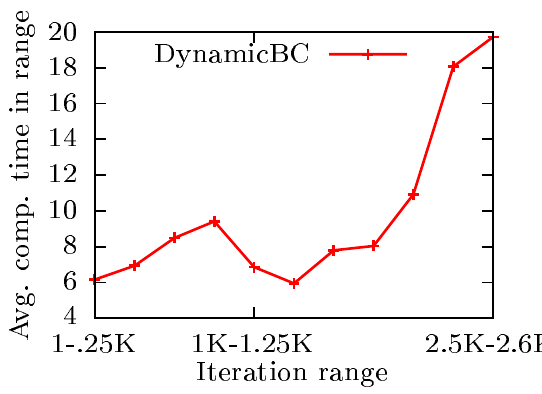} &
	\includegraphics[width=0.45\textwidth]{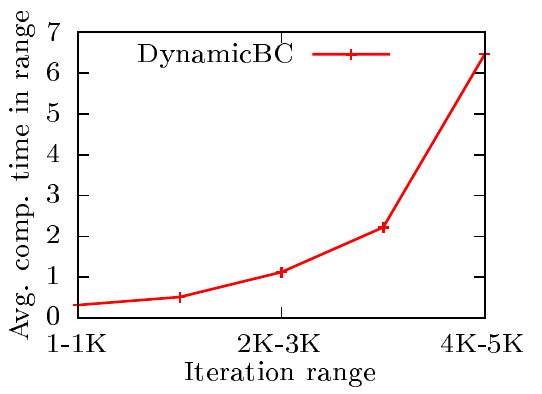} \\
		(c) {\tt epinions-rating-init} &
		(d) {\tt flickr-membership}-$1$ \\
\end{tabular}
\caption{\textbf{Computation time (in sec.)  for enumerating the change in maximal bicliques, per batch of edges.}}
\label{fig:time_for_total_change}
\end{figure*}

\begin{figure*}[t!]
\centering
\begin{tabular}{cc}
	\includegraphics[width=.45\textwidth]{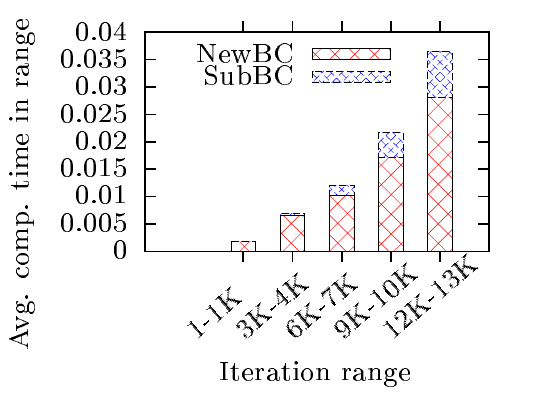} &
	\includegraphics[width=.45\textwidth]{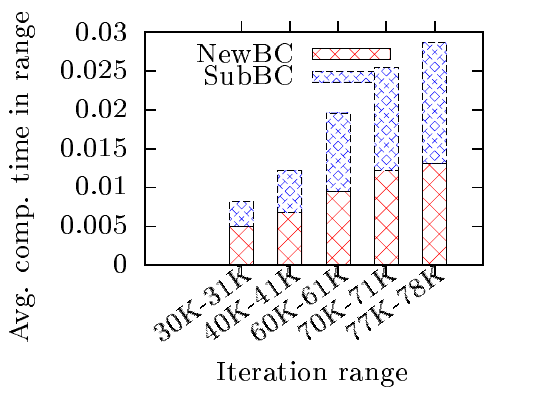} \\
	(a) {\tt actor-movie}-$1$ &
	(b) {\tt dblp-author}-$1$ \\
	\includegraphics[width=.45\textwidth]{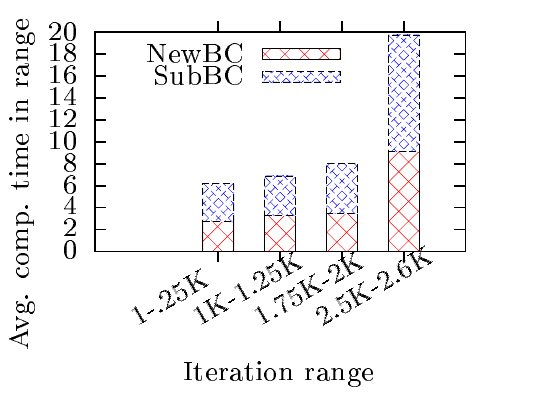} &
	\includegraphics[width=.45\textwidth]{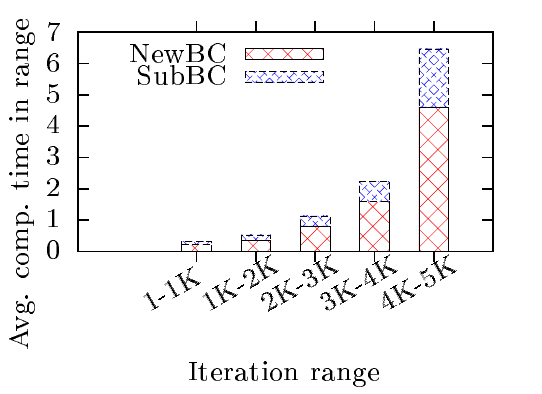} \\
	(c) {\tt epinions-rating-init}&
	(d) {\tt flickr-membership}-$1$ \\
\end{tabular}
\caption{\textbf{Computation  time (in sec.) broken down into time for new and subsumed bicliques}}
\label{fig:time_division}
\end{figure*}

\subsection{Experimental Setup and Implementation Details}
We implemented our algorithms using Java on a $64$-bit Intel(R) Xeon(R) CPU clocked at $3.10$ Ghz and $8$G DDR3 RAM with $6$G heap memory space. Unless otherwise specified, we considered batches of size $100$.


\textbf{Metrics:} We evaluate our algorithms using the following metrics: (1)~computation time for new maximal bicliques and subsumed bicliques when a set of edges are added, (2)~change-sensitiveness, that is, the total computation time as a function of the size of change. We measure the size of change as the sum of the total number of edges in the new maximal bicliques and the subsumed bicliques, and (3)~space cost, that is the memory used by the algorithm for storing the graph, and other data structures used by the algorithm, and (4)~cumulative computation time for different batch sizes, that is the cumulative computation time from the initial graph to the final graph while using different batch size. 

\begin{figure*}[t!]
\centering
\begin{tabular}{cc}
	\includegraphics[width=.45\textwidth]{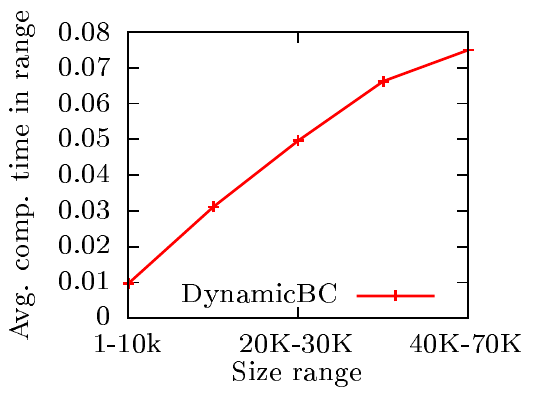} &
       \hspace{-8mm}
	\includegraphics[width=.45\textwidth]{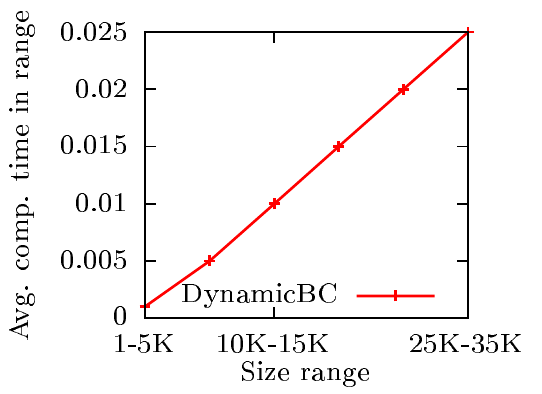} \\
	(a) {\tt actor-movie}-$1$&
	(b) {\tt dblp-author}-$1$ \\
	\includegraphics[width=.45\textwidth]{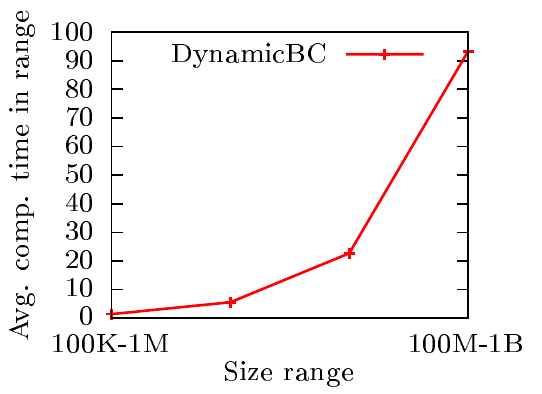} &
	\includegraphics[width=.45\textwidth]{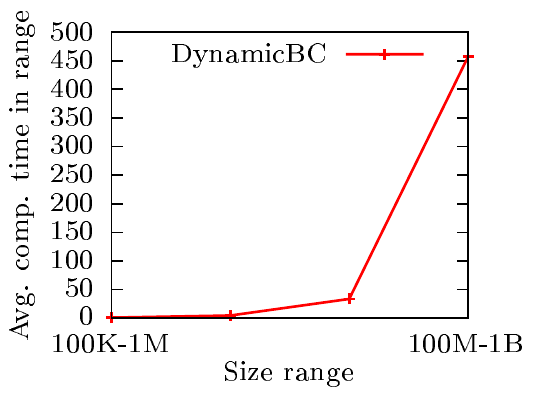} \\
		(c) {\tt epinions-rating-init}&
		\hspace{-4mm}
		(d) {\tt flickr-membership}-$1$ \\
\end{tabular}
\caption{\textbf{Computation time (in sec.) for total change vs. size of total change.}}
\label{fig:change_vs_time}
\end{figure*}
\subsection{Discussion of Results}
{\bf Comparison with Baseline.} We compared the performance of our algorithm, $\mbc$, with a baseline algorithm for maintaining maximal bicliques, we have implemented algorithm that we call $\naive$. The baseline algorithm computes $\Upsilon(G,G+H)$ by (1)~Enumerating $\bicliques(G)$, (2)~Enumerating $\bicliques(G+H)$, and (3)~computing the difference of the two. We use $\mbe$~\cite{LSL06} for enumerating bicliques from a static graph. Table~\ref{table-3} shows a comparison of the runtimes of $\mbc$ and $\naive$.  From the table, it is clear that $\mbc$ is faster than $\naive$ by two to three orders of magnitude. For instance, for adding a single batch of size $100$ to {\tt actor-movie-1}, $\naive$ takes more than $30$ min., whereas $\mbc$ takes around $30$ ms. \\

\textbf{Computation Time per Batch of Edges:} Let an ``iteration" denote the addition of a single batch of edges. Fig.~\ref{fig:time_for_total_change} shows the computation time per iteration versus iteration number. From the plots, we observe that the computation time increases as the iteration increases. This trend is consistent with predictions. Note that as computation progresses, the number of edges in the graph increases, and note that the the computation time is proportional to the size of graph as well as size of change (Theorem~\ref{thm:main}). In Fig.~\ref{fig:time_for_total_change}(c) we see that computation time decreases suddenly and then again increases. This may seem anomalous, but is explained by noting that in these cases, the magnitude of change decreases in those iterations, and then increases thereafter. \\

In Fig.~\ref{fig:time_division}, we show the breakdown of the computation time of $\mbc$ into time taken for enumerating new cliques ($\csnewb$) and for enumerating subsumed cliques ($\cssubb$). Observe that the computation time increases for both new maximal bicliques and subsumed bicliques as more batches are added. This is because the graph becomes denser when more batches are added and the time taken to compute the change increases, consistent with Theorem~\ref{thm:main}.\\


\textbf{Change-Sensitiveness:}  Fig.~\ref{fig:change_vs_time} shows the computation time as a function of the size of change. We observe that the computation time of $\mbc$ is roughly proportional to the size of change. The computation time of both $\csnewb$ and $\cssubb$ increases as number of new maximal bicliques and subsumed bicliques increases. Clearly, this observation supports our theoretical analysis. In some plots (Fig.~\ref{fig:change_vs_time}(c),\ref{fig:change_vs_time}(d)) we see a rapid increase in the computation time with the size of change. This is because, when the graph grows, memory consumption increases considerably and this affects the computation time of the algorithm.\\

\textbf{Space Cost:} Fig.~\ref{fig:space_cost} shows the space cost of $\mbc$ for different graphs. As $\cssubb$ needs to maintain the maximal bicliques in memory for computing subsumed bicliques, we report the space consumption in two cases: (1) when we store the maximal bicliques in memory, (2) when we store the signatures of bicliques in memory instead of storing the bicliques. Signatures consume less memory than the actual bicliques as the signatures have fixed size ($64$ bits in our case using the murmur hash function) for different sizes of bicliques. Therefore, memory consumption by the algorithm that uses signatures should be smaller than the algorithm that does not use signatures. The trend is also clear in the plots. The difference in memory consumption is not prominent during the initial iterations because, sizes of maximal bicliques are much smaller during initial iterations and therefore memory consumption is mainly due to the graph that we maintain in memory. We are not showing the space cost without hash for the third input graph because the algorithm could not execute on the third input graph without hashing, due to running out of memory.\\

\begin{figure*}[t!]
\centering
\begin{tabular}{cc}
	\includegraphics[width=.45\textwidth]{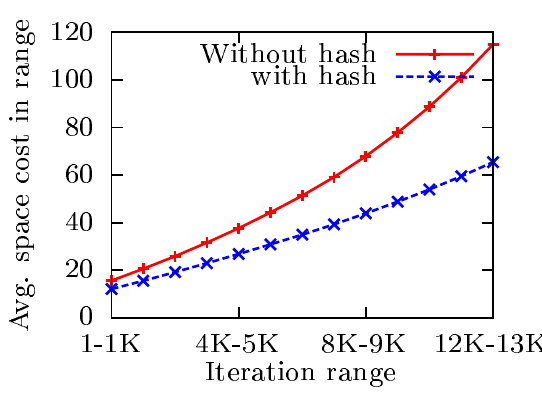} &
	\includegraphics[width=.45\textwidth]{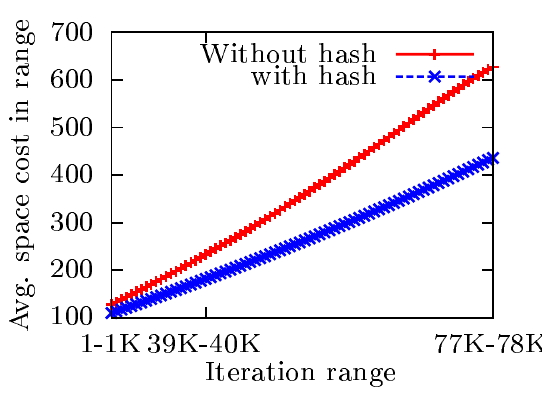} \\
	(a) {\tt actor-movie}-$1$ &
	(b) {\tt dblp-author}-$1$ \\
	\includegraphics[width=.45\textwidth]{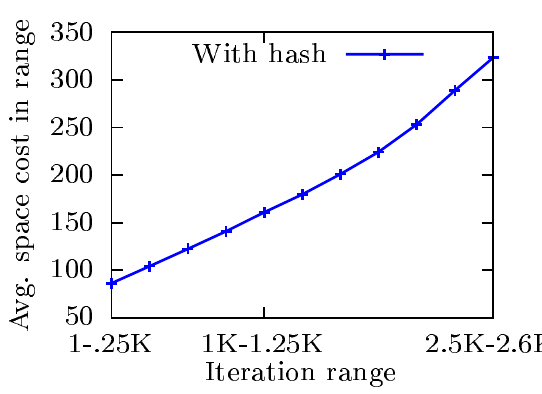} &
	\includegraphics[width=.45\textwidth]{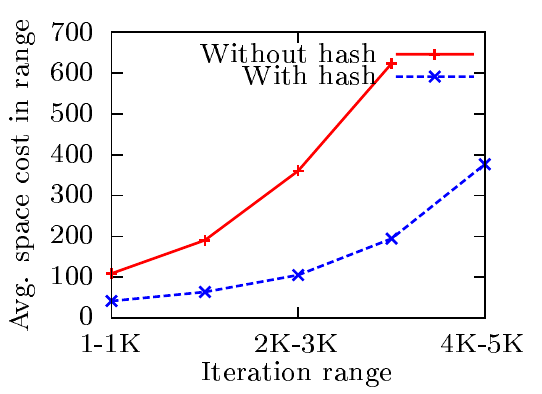} \\
		(c) {\tt epinion-rating-init} &
		(d) {\tt flickr-membership}-$1$ \\
\end{tabular}
\caption{\textbf{Space cost (in MB)}}
\label{fig:space_cost}
\end{figure*}

\textbf{Computation Time for Different Batch Size:} Table~\ref{table-2} shows the cumulative computation time for different graphs when we use different batch size. We observe that the total computation time increases when increasing the batch size. The reason for this trend is that the computation time for subsumed cliques increases with increasing batch size, while the computation time for the new maximal bicliques remains almost same across different batch sizes. Note that, the time complexity for $\cssubb$ has (in the worst case) an exponential dependence on the batch size. Therefore, the computation time for subsumed cliques tends to increase with an increase in the batch size. However, with a very small batch size (such as $1$ or $10$), the change in the maximal bicliques is very small, and the overhead can be large.


\begin{table*}[!t]
\small
\centering
\caption{\textbf{Total computation time (from the initial graph to the final graph) for different batch sizes}} 
\label{table-2}
\scalebox{1}{
\begin{tabular}{|l |c| c| c|}

\toprule
\textbf{Initial-graph} & \textbf{batch-size}-$1$ & \textbf{batch-size}-$10$ & \textbf{batch-size}-$100$\\
\midrule
{\tt actor-movie}-$1$ & $3.8$ min. ($3.3 + 0.5$) & $3.8$ min. ($2.8 + 1$) & $3.9$ min. ($2.9 + 1$) \\
\hline
{\tt dblp-author}-$1$ & $11.3$ min. (9 + 2.3) & $14.1$ min. ($8.8 + 5.3$) & $15.7$ min. ($8.3 + 7.4$) \\
\hline
{\tt epinion-rating-init} & $3.3$ hours ($3.1 + .2$) & $3.7$ hours ($3.1 + 0.6$) & $7$ hours ($3.2 + 3.8$)\\
\hline
{\tt flickr-membership}-$1$ & $2.1$ hours ($1.9 + 0.2$) & $2.4$ hours ($1.9 + 0.5$) & $3$ hours ($2.1 + 0.9$) \\
\bottomrule
\end{tabular}
}
\end{table*} 
\

\textbf{Maintaining Large Maximal Bicliques: } We also consider maintaining large maximal bicliques with predefined size threshold $s$, where it is required that each bipartition of the biclique has size at least $s$.  For large subsumed bicliques, we provide $s$ in addition to other inputs to $\cssubb$ as well. Table~\ref{table-4} shows the cumulative computation time by varying the threshold size $s$ from $1$ to $6$. Clearly, $s=1$ means that we maintain all maximal bicliques. As expected, the cumulative computation time decreases significantly in most of the cases as the size threshold $s$ increases.\\

\begin{table*}[!t]
\small
\centering
\caption{\textbf{Total computation time (from initial to final graph) by varying the threshold size $s$}}
\label{table-4}
\scalebox{1}{
\begin{tabular}{|l|c|c|c|c|c|c|}
\toprule
\textbf{Initial-graph} & $s=1$ & $s=2$ & $s=3$ & $s=4$ & $s=5$ & $s=6$\\
\midrule
{\tt actor-movie}-$1$ & $203$ sec. & $124$ sec. & $105$ sec. & $100$ sec. & $103$ sec. & $98$ sec. \\
\hline
{\tt dblp-author}-$1$ & $947$ sec. & $531$ sec. & $445$ sec. & $403$ sec. & $399$ sec. & $400$ sec. \\
\hline
{\tt epinion-rating-init} & $7$ hours & $6.5$ hours & $6.3$ hours & $6$ hours & $5.5$ hours & $5$ hours \\
\hline
{\tt flickr-membership}-$1$ & $3$ hours & $2.5$ hours & $2.3$ hours & $2.1$ hours & $1.9$ hours & $1.6$ hours \\
\bottomrule
\end{tabular}
}
\end{table*}

\section{Conclusion}\label{conclude}
In this work, we presented a change-sensitive algorithm for enumerating changes in the set of maximal bicliques in dynamic graph. The performance of this algorithm is proportional to the magnitude of change in the set of maximal bicliques -- when the change is small, the algorithm runs faster, and when the change is large, it takes a proportionally longer time. We present near-tight bounds on the maximum possible change in the set of maximal bicliques, due to a change in the set of edges in the graph.  Our experimental evaluation shows that the algorithm is efficient in practice, and scales to graphs with millions of edges. This work leads to natural open questions (1)~Can we design more efficient algorithms for enumerating the change, especially for enumerating subsumed cliques? (2)~Can we parallelize the algorithm for enumerating the change in maximal bicliques?

\bibliographystyle{abbrv}
\bibliography{dmbe}

\begin{IEEEbiography}[{\includegraphics[width=1.0in,height=1.5in,clip,keepaspectratio]{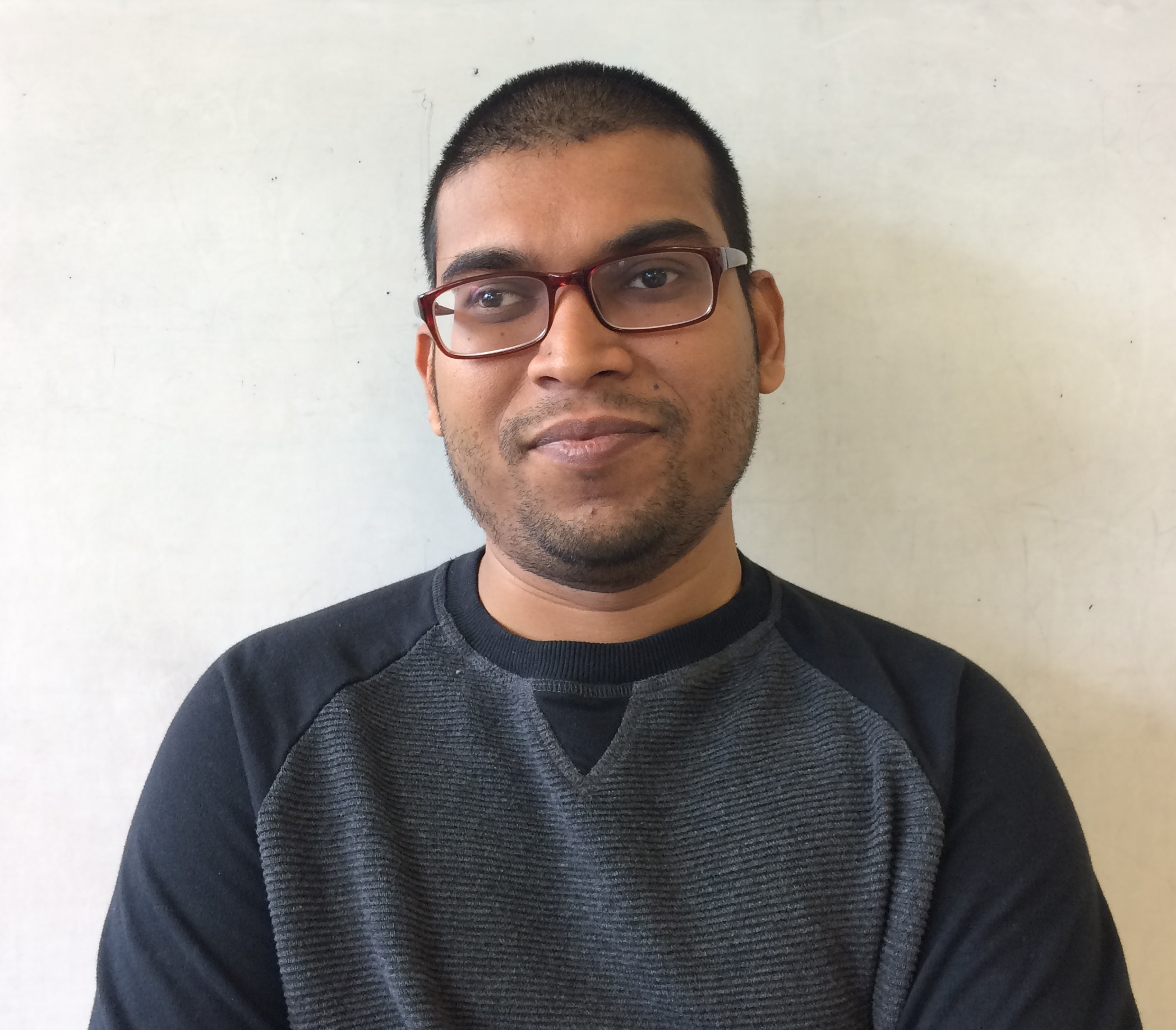}}]{Apurba Das}
is a 4th year Ph.D. student in the department of Computer Engineering at Iowa State University. He received his Masters in Computer Science from Indian Statistical Institute, Kolkata in 2011 and worked for 2 years after that as a software developer at Ixia. His research interests are in the area of graph mining, dynamic and streaming graph algorithms, and large scale data analysis.
\end{IEEEbiography}

\begin{IEEEbiography}[{\includegraphics[width=1.0in,height=1.5in,clip,keepaspectratio]{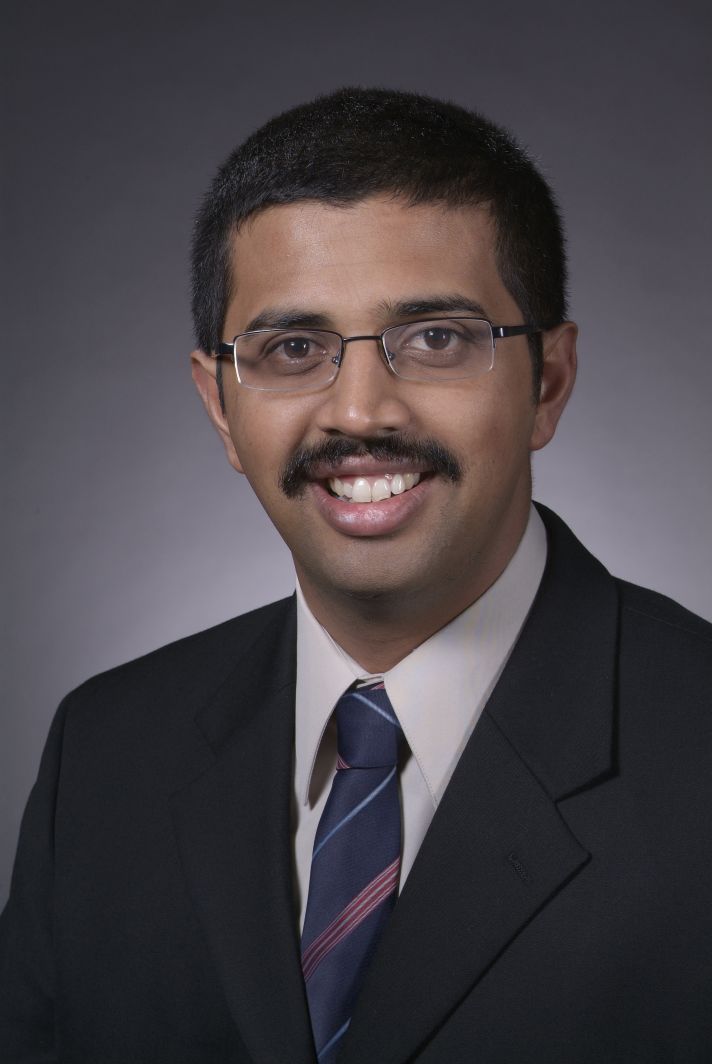}}]{Dr. Srikanta Tirthapura}
received his Ph.D. in Computer Science from Brown University in 2002, and his B.Tech. in Computer Science and Engineering from IIT Madras in 1996. He is the Kingland Professor of Data Analytics in the department of Electrical and Computer Engineering at Iowa State University. He has worked at Oracle Corporation and is a recipient of the IBM Faculty Award, and the Warren Boast Award for excellence in Undergraduate Teaching. His research interests include algorithms for large-scale data analysis, stream computing, and cybersecurity.
\end{IEEEbiography}
\end{document}